\documentclass[envcountsame]{article}
%%
%% US: The following makes the paper more comment-friendly and must
%% be removed before submission ....
%\usepackage{geometry}
%\geometry{
%  a4paper,         % or letterpaper
%  textwidth=15cm,  % llncs has 12.2cm
%  textheight=24cm, % llncs has 19.3cm
%  heightrounded,   % integer number of lines
%  hratio=1:1,      % horizontally centered
%  vratio=2:3,      % not vertically centered
%}
%\usepackage{setspace}
%\onehalfspacing % or \doublespacing
%%

\usepackage[T1]{fontenc}
\usepackage{a4wide}

\usepackage{adjustbox}
\usepackage{algorithm}
\usepackage{algorithmicx}
\usepackage[noend]{algpseudocode}
\usepackage{amsfonts,amsmath,amssymb,amsthm} 
\usepackage[title]{appendix}
\usepackage[english]{babel}
\usepackage{booktabs}
\usepackage{braket}
\usepackage[labelfont=bf]{caption}
\usepackage{colonequals}
\usepackage[shortlabels]{enumitem}
\usepackage{epigraph}
\usepackage{forest}
\usepackage{graphicx}
\usepackage{listings}
\usepackage{manyfoot}
\usepackage{mathrsfs}
\usepackage{mathtools}
\usepackage{multirow}
\usepackage{pgf}
\usepackage{pstricks}
\usepackage{setspace}
\usepackage{stackengine}
\usepackage{stix}
\usepackage{stmaryrd}
\usepackage{subcaption}
\usepackage{tabularx}
\usepackage{textcomp}
\usepackage{thmtools,thm-restate}
\usepackage{verbatim}

\usepackage{tikz,tikz-cd}
\usetikzlibrary{arrows,arrows.meta,automata,backgrounds,calc,chains,decorations.pathmorphing,decorations.pathreplacing,decorations.text,matrix,petri,positioning,shapes,shapes.geometric,shapes.multipart,shapes.symbols,trees}
%shapes for tikz
\tikzstyle{nod}= [circle, draw,inner sep=0pt, minimum size=0.5cm]
\tikzset{
modal/.style={>=stealth’,shorten >=1pt,shorten <=1pt,auto,node distance=1.5cm,
semithick},
world/.style={circle,draw,minimum size=0.5cm,fill=gray!15},
point/.style={circle,draw,inner sep=0.5mm,fill=black},
reflexive above/.style={->,loop,looseness=7,in=120,out=60},
reflexive below/.style={->,loop,looseness=7,in=240,out=300},
reflexive left/.style={->,loop,looseness=7,in=150,out=210},
reflexive right/.style={->,loop,looseness=7,in=30,out=330}
}

\usepackage{hyperref}
\usepackage[capitalize]{cleveref}

%packages for table
%package for double brackets

\pagestyle{empty}

\newcommand{\agents}{\mathcal{A}}
\newcommand{\calM}{\mathcal{M}}
\newcommand{\calN}{\mathcal{N}}

\newcommand{\prop}{\mathit{Prop}}
\newcommand{\atoms}{\prop}

\newcommand{\APB}{\mathit{A\!P\!B}}

\newcommand{\C}{\mathord{\mapsto}}
\newcommand{\U}{{\sf U}}

\newcommand{\ce}{\colonequals}

\newtheorem{theorem}{Theorem}%  meant for continuous numbers
%  meant for continuous numbers
\newtheorem{corollary}[theorem]{Corollary}%
%%\newtheorem{theorem}{Theorem}[section]% meant for sectionwise numbers
%% optional argument [theorem] produces theorem numbering sequence instead of independent numbers for Proposition
\newtheorem{proposition}[theorem]{Proposition}% 

\theoremstyle{remark}%
\newtheorem{example}[theorem]{Example}%
\newtheorem{remark}[theorem]{Remark}%

\theoremstyle{definition}%
\newtheorem{definition}[theorem]{Definition}%

\raggedbottom
%%\unnumbered% uncomment this for unnumbered level heads
\usepackage[T1]{fontenc}
\usepackage[utf8]{inputenc}
\usepackage{authblk}
\renewcommand{\phi}{\varphi}
\renewcommand{\models}{\vDash}
\newcommand{\notmodels}{\nvDash}
\newcommand{\upd}{\mathrel{\!\mid\!}}
\newcommand{\updpriv}[1]{\mathrel{\!\mid_{#1}\!}}
\newcommand{\ASF}{\mathit{A\!S\!F}}

\usepackage{stackengine}

\pagestyle{plain}

\title{A~priori Belief Updates\\ as a Method for Agent Self-Recovery
\thanks{This research was funded by the Austrian Science Fund (FWF) project ByzDEL (P33600).}
}
\author[ ]{Giorgio Cignarale}
\author[ ]{ Roman Kuznets}
\affil[ ]{TU Wien, Austria}
\affil[ ]{\textit {\{giorgio.cignarale,roman.kuznets\}@tuwien.ac.at}}
\date{}
\providecommand{\keywords}[1]{\textbf{\textit{Keywords:}} #1}

\begin{document}

\maketitle

\begin{abstract}
Standard epistemic logic is concerned with describing  agents' epistemic attitudes  given the current set of alternatives the agents consider possible. 
While distributed systems can (and often are)  discussed without mentioning epistemics, it has been well established that epistemic phenomena lie at the heart of what agents, or processes, can and cannot do. Dynamic epistemic logic (DEL) aims to describe how epistemic attitudes of the agents/processes change based on the new information they receive, e.g., based on their observations of events and actions in a distributed system.
In a broader philosophical view, this  appeals to an \emph{a~posteriori} kind of reasoning, where agents update the set of alternatives considered possible  based on their ``experiences.'' 

Until recently, there was little incentive to formalize \emph{a~priori} reasoning, which  plays a role in designing and maintaining distributed systems, e.g.,
in determining which states must be considered possible by agents in order to solve the distributed task at hand, and consequently in updating these states when unforeseen situations arise during runtime.
With systems becoming more and more complex and large,  the task of fixing design errors ``on the fly'' is shifted to individual agents, such as in
the increasingly popular self-adaptive and self-organizing (SASO) systems.
Rather than updating agents' \textit{a~posteriori beliefs}, this requires modifying their  \textit{a~priori beliefs} about the system's global design  and parameters. 

The goal of this paper is to 
provide a formalization of such a~priori reasoning by using standard epistemic semantic tools, including Kripke models and DEL-style updates, and provide heuristics that would pave the way to streamlining this inherently non-deterministic and \emph{ad~hoc} process for SASO systems.

\keywords{distributed systems -- dynamic epistemic logic -- a~priori beliefs -- philosophy of computation -- self-adaptive and self-organizing systems}
\end{abstract}

\section{Introduction}
\label{sec:Intro}
\emph{Epistemic logic}~\cite{Hintikka1962-HINKAB} reasons about knowledge and/or beliefs of agents in a multiagent system. 
\emph{Distributed systems} are a type of multiagent systems, with agents often referred to as \emph{processes}, where these processes must coordinate their actions by  communicating via either message passing or  shared memory   in order to accomplish some task~\cite{Lynch_DA,DS:concept_design}. 
Formal epistemic modeling~\cite{bookof4} has proved to be useful for characterizing the distributed system's evolution over time~\cite{HM90}, for deriving impossibility results~\cite{MT88}, and  for determining what processes can compute based on their local state in a given model~\cite{ben2014beyond,GM18:PODC,CGM14}. The reason why epistemic analysis is always relevant in distributed scenarios was recently formalized as the Knowledge of Preconditions Principle in~\cite{Mos15TARK}, which is so universal that it applies even to fault-tolerant distributed systems~\cite{KPSF19:TARK,KuznetsPSF19FroCoS,fire,SchloeglS23}.
Dynamic epistemic logic (DEL)~\cite{Plaza,GerbrandyG97,Baltag2016,ditmarsch2007dynamic} provides tools for analyzing change in agents' epistemic attitudes in response to new information.

The epistemic analysis of distributed systems~\cite{bookof4} and of epistemic puzzles~\cite{vDitmarschK15} routinely relies on agents' common knowledge of the model~\cite{artemov}. In effect, this is used to model agents' common \emph{a~priori} assumptions and enables agents to reason about (higher-order) reasoning of other agents. Note that this type of knowledge differs from what agents learn through communication, independently of whether that communication is public, as in many epistemic puzzles, or private, as is more common in distributed systems. The information agents learn while playing out a puzzle or during a distributed system run is experience-based, \emph{a~posteriori} knowledge. Accordingly, dynamic epistemic logic implements knowledge change by model modifications that reorganize and shrink the already available possibilities, in contrast to the initial epistemic model creating the common space of these possibilities for the agents based on  the puzzle description or distributed system specification.

Therefore, the system designer's task of creating a distributed system to given specifications can be viewed as creating common a~priori knowledge for the agents. The role of a~priori knowledge in the design cycle of distributed systems is analyzed in~\cite{APK}. As argued there, mistakes in a system design would normally require the system designer to initiate the recovery process that amounts to the a~priori knowledge update, better termed \emph{a~priori belief update} due to the fallibility assumption inherent in the situation when system behavior does not match the desired specifications. The fallibility here applies not only to the agents but also to the system designer who failed to account for some factors and/or behaviors. This picture of a~posteriori experiences  triggering  a~priori belief updates is largely based on 
a new philosophical approach to the a~priori vs.~a~posteriori distinction proposed in~\cite{Tahko2008-TAHAND,Tahko2011-TAHAPA}, where it is termed the \emph{bootstrapping relation}. It is important to note that a~priori knowledge/belief is characterized there  as \emph{modal}, i.e., relating to the set of possible states conceived by  agents, and  \textit{fallible}, in the sense that  that the actual world (or a faithful copy thereof) need not be among this set of possible states.

In the case of traditional distributed systems, aberrant a~posteriori behavior prompts the system designer to trigger a new iteration of the design cycle: in a new design phase, she will re-adapt the a~priori system assumptions so as to match the intended system behavior, redesign the affected parts of the implementation, and finally deploy and restart them, initializing the agents with updated a~priori assumptions. Given the trend towards more and more complex and growing distributed systems, however, discovering and recovering from such design errors is increasingly becoming prohibitively costly: the ability to predict and/or monitor  possible behaviors of such a system decreases exponentially, whereas the redesign costs increase dramatically. 

This trend fueled the development of \emph{self-adaptive and self-organizing systems} (\emph{SASO systems})~\cite{ft_self-*,Knowthyself} that have self-reflection and self-adaption capabilities. SASO systems allow processes to access and operate with their own representation of the system, which in turn enables them to update certain design assumptions on their own. In other words, in addition to a~posteriori belief updates, which can be handled by traditional DEL methods, agents in SASO systems are expected to perform a~priori belief updates with the goal of self-correcting their behavior, in response to situations not envisioned by the system designer.

This paper is devoted to developing an epistemic formalization of self-recovery capabilities for agents by means of  a~priori belief updates, implemented in the form of DEL-inspired updates. We focus on self-recovery from an inconsistent state of beliefs and, following Plaza~\cite{Plaza}, illustrate our methods using (variants of) standard epistemic puzzles such as the consecutive numbers and muddy children puzzle.

Let us first illustrate how faulty a~priori  assumptions can derail the progress, say,  in the consecutive numbers puzzle and how human agents might still be able to find a solution by adjusting their a~priori beliefs.\looseness=-1
\begin{example}[Consecutive numbers]
    \label{ex:consec}
Two agents $a$ and $b$ are privately told two natural numbers. In addition, they are publicly told that the two numbers are consecutive (making it common knowledge).
Suppose that $a$ is told number 1 and $b$ is told number 2. They are allowed to state whether they know the other's number or not, but not allowed to communicate their own number. Ordinarily, this instance of the puzzle is solved by the following dialog:
\begin{itemize}
    \item $a$: I don't know your number.
    \item $b$: I don't know your number either.
    \item $a$: Now I know your number.
    \item $b$: Now I know yours too.
\end{itemize}
Here the first statement by $a$ is uninformative. The first statement by $b$ makes it clear that $b$'s number is not 0, which enables $a$ to conclude that $b$'s number must be 2. Since this determination would not have happened were $a$ to hold number 3, now $b$ can conclude that $a$'s number is 1. The standard epistemic modeling of this example involves a Kripke model that each agent is supposed to build based on the rules of the puzzle in a way that makes this model commonly known to both agents. This commonality is based, in Lewis's telling, on the ``suitable ancillary premises regarding [agents'] rationality, inductive standards, and background information'' (\cite[p. 53]{Lewis69}). While agents in epistemic puzzles are routinely considered to be perfect reasoners, which takes care of rationality and inductive standards, the question of background information is much less clear cut.

For our twist on the original formulation, imagine that, unknown to each other, $a$ and $b$ learned different definitions of natural numbers in school: $a$ starts them from 1, while for $b$ number 0 is also  natural. What Lewis called belief in common background information and we call \emph{a~priori beliefs of the agent} is not shared by the agents preventing them from achieving common knowledge. Since natural numbers are routinely assumed (including in the formulation of the consecutive numbers puzzle) to be a well-defined object, each agent incorrectly believes that the other agent shares their  definition of natural numbers. As is to be expected, the common knowledge of the model and of the situation at hand shatters, leading to the following possible conversation:
\begin{itemize}
    \item $a$: I know your number.
    \item $b$: Wait, what? But that is impossible, unless... Ah, I see. Then I know your number too.
\end{itemize}
Here agent $a$ does not consider $(1,0)$ to be a legitimate pair, hence, (correctly) concludes that the numbers are~$(1,2)$. Agent $b$, on the other hand, expects $a$ to consider $(1,0)$ and, hence, does not understand $a$'s~reasoning.
Indeed, according to $b$, if $a$ had 1, he would have hesitated between $(1,0)$ and $(1,2)$, while if $a$~had~$3$, he would have hesitated between $(3,2)$ and $(3,4)$.  
Thus, $a$'s statement is incompatible with $b$'s~view of the world. In the proposed conversation, $b$ does what is natural for a human reasoner: she questions her a~priori assumptions, conceives that there is an alternative understanding of natural numbers as starting from 1, realizes that this is compatible with $a$'s behavior, and updates her a~priori beliefs (about $a$'s a~priori beliefs). Moreover, after this update, $a$'s claim to knowing $b$'s number is only compatible with $a$ having 1: 
were $a$'s~number~$3$, the hesitation between $(3,2)$ and $(3,4)$ would have persisted. This update of $b$'s a~priori beliefs enables her to both explain the situation and arrive at the correct conclusion.
\end{example}

Standard (dynamic) epistemic reasoning, on the other hand, does not provide an adequate explanation. Epistemically, $b$'s beliefs are supposed to become inconsistent. 
In fault-tolerant systems, this often translates to $b$~considering herself and/or the other agent fully byzantine\footnote{Fully byzantine agents can deviate arbitrarily from their original protocols as well as have false memories and erroneous perceptions. Moreover, their goals might not be known to correct agents. As such, they are attributed potentially inconsistent epistemic attitudes and cannot be trusted by correct agents.} and thus completely unreliable \cite{KPSF19:TARK}, making the puzzle not solvable.

\begin{paragraph}{Contributions} 
To the best of our knowledge, there is currently no epistemic modeling and analysis framework that explicitly considers a~priori beliefs and their dynamics, and the issue of epistemic modeling of self-recovery has not been addressed in the literature.
The goal of the present paper is to model a~priori belief updates epistemically, thus, providing agents with self-correcting capabilities.
While the process remains highly non-deterministic in general, as there are multiple possible ways to resolve  design mistakes, we provide some general guidelines and heuristics to guide possible future implementations of such self-recovery operations in SASO systems.
\end{paragraph}

\begin{paragraph}{Paper organization}
We provide some basic definitions of epistemic logic in~\cref{sec:FP}. In \cref{sec:APB_EL} we discuss the guiding principles behind our epistemic approach to agents' a~priori beliefs, highlighting their private nature (\cref{subsec:priv}), their usefulness in conflict resolutions (\cref{subsec:confl}), and their limitations in the higher-order case (\cref{subsec:hoAPR}). \cref{sec:APBupdate} introduces the novel a~priori belief update mechanism for self-recovery (\cref{subsec:generalmech}), and we show its fruitfulness in variants of popular epistemic puzzles (\cref{subsec:examples}), including a~priori belief updates triggered by public announcements (\cref{subsec:example_PA}), simultaneous (and independent) a~priori updates by several agents (\cref{subsec:example_Simultaneous}), and  a~priori updates (\cref{subsec:example_unsuccess}) that do not achieve the desired goals.
Some heuristics for the update synthesis problem are provided in \cref{subsec:heuristics}. \cref{sec:propertiesAPBU} lists some useful properties of a~priori belief updates, and finally conclusions are provided in \cref{sec:concl}.
\end{paragraph}

\section{Formal Preliminaries}
\label{sec:FP}

Throughout the paper, we assume a fixed finite  set $\agents\ne \varnothing$ of \emph{agents}.
As is common, we employ Kripke semantics to reason about agents' epistemic states. Since we are interested in the dynamics of belief chance, we use PAL, the logic of public announcements as the simplest version of DEL.

\begin{definition}[Language]
    The epistemic language with public announcements for agents from~$\agents$ is defined by
    \[
    \varphi \coloncolonequals p \mid \neg \varphi \mid (\varphi \land \varphi) \mid B_i \varphi \mid [\varphi] \phi
    \]
    where $p \in \mathit{Prop}$ is an \emph{atomic proposition} (or simply \emph{atom}) and $i \in \agents$.
\end{definition}

\begin{definition}[Kripke models] 
\label{def:kripke}
A \emph{Kripke model}  $\calM = \langle W, R, V\rangle$ is a triple comprising:
\begin{itemize}
    \item a  set of \emph{possible worlds}	 $W \ne \varnothing$;
    \item A function $R \colon \agents \to 2^{W\times W}$ that assigns to each agent $i \in \agents$ a binary relation $R(i) \subseteq W \times W$, called \emph{accessibility relation} and usually denoted~$R_i$ instead of~$R(i)$;
    \item a \emph{valuation function} $V \colon \atoms \to 2^W$ that assigns to each atom $p \in \atoms$ a set $V(p) \subseteq W$ of worlds  where $p$ holds.
\end{itemize}
We use notation $R_i(u) \colonequals \{v \in W \mid u R_i v\}$ for the set of all worlds that agent $i$ considers possible in a world~$u \in W$.
A \emph{pointed Kripke model} is a pair $(\calM, v)$ where $\calM$ is a Kripke model and  
 $v \in W$ represents the \emph{real} (or \emph{actual}) \emph{world}.
\end{definition}

\begin{definition}[Truth]
\emph{Truth of a formula $\phi$ at a world $w$ of a Kripke model $\calM= \langle W, R, V\rangle$} is defined recursively:  for atoms, $\calM, w \models p$ if{f} $w \in V(p)$; boolean connectives behave classically;
		$\calM, w \models B_i \varphi$  if{f}  $\calM, u \models \varphi$ for all  $u\in R_i(w)$;
finally, $\calM, w \models [\varphi] \psi$  if{f}  either $\calM, w \notmodels \phi$ or ($\calM, w \models \phi$ and $\calM\upd\varphi, w \models \psi$), where $\calM\upd\varphi \colonequals \langle W',R',V' \rangle$ is defined 
as:
\begin{itemize}
        \item $W' \colonequals \{u \in W \mid \calM, u \models \varphi\}$  (note that $w \in W'$ whenever this submodel is constructed);
        \item $R_i' \colonequals R_i \cap (W' \times W')$ for each $i \in \agents$;
        \item $V'(p) \colonequals V(p) \cap W'$ for each $p \in \atoms$.
    \end{itemize}  
    Strictly, speaking $\calM, w \models \phi$ and $\calM \upd \phi$ are defined by mutual recursion on $\phi$, as is standard in DEL.
		Formula $\varphi$ is \emph{false at world $w$}, denoted $\calM,w \notmodels \varphi$, if{f} it is not true at $w$.
\end{definition}

\begin{definition}[Binary relation types]
    A binary relation $R \subseteq W \times W$ is called  \emph{reflexive} if{f} $w R w$ for all \mbox{$w \in W$}; \emph{transitive} if{f} for all $w, v, u \in W$ we have $w R u$ whenever $w R v$ and $v R u$; \emph{euclidean} if{f} for all $w, v, u \in W$ we have $v R u$ whenever $w R v$ and $w R u$; \emph{symmetric} if{f} for all $w, v \in W$ we have $v R w$ whenever $w R v$. Relation $R$ is called an \emph{equivalence relation} if{f} it is reflexive, transitive, and euclidean; a \emph{partial equivalence relation} if{f} it is transitive and symmetric; an \emph{introspective relation} if{f} it is transitive and euclidean. \looseness=-1
\end{definition}
\begin{proposition}
    An equivalence relation is also  symmetric, hence, introspective and a partial equivalence relation. A partial equivalence relation is also euclidean, hence, introspective.
\end{proposition}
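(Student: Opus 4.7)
The plan is to decompose the statement into two elementary implications and treat the remaining claims as unfoldings of definitions: (A) every equivalence relation is symmetric, and (B) every partial equivalence relation is euclidean. Once (A) and (B) are in hand, the chain of consequences follows for free. An equivalence relation is transitive and euclidean by hypothesis, hence already introspective by definition; combined with symmetry obtained in (A), it is also a partial equivalence relation. A partial equivalence relation is transitive by hypothesis and euclidean by (B), hence introspective by definition.

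For (A), I would take an arbitrary pair $w,v$ with $wRv$. Reflexivity supplies $wRw$, so the pair $wRv$ and $wRw$ falls under the euclidean hypothesis (with $w$ as the common left endpoint and $v,w$ as the two right endpoints), which then yields $vRw$. For (B), I would take $w,v,u$ with $wRv$ and $wRu$. Applying symmetry to $wRv$ gives $vRw$, and then chaining $vRw$ with $wRu$ through transitivity yields $vRu$, as required.

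There is no real obstacle; the proof is a short exercise in relational algebra. The only minor thing to watch is bookkeeping the order of the two micro-steps in (A) and (B), and not inadvertently invoking properties that are not in the assumed set (for instance, using symmetry in (A) before it has been derived).
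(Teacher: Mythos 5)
Your proof is correct and is exactly the intended argument; the paper states this proposition without proof, and your two micro-steps (deriving symmetry from reflexivity plus euclideanity, and euclideanity from symmetry plus transitivity) together with the definitional unfoldings are precisely what is needed given the paper's nonstandard definition of an equivalence relation as reflexive, transitive, and euclidean.
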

\begin{definition}[Model types]
    We call a Kripke model $\langle W, R, V\rangle$ \emph{epistemic} if{f} all $R_a$ are equivalence relations; \emph{introspective} if{f} all $R_a$ are introspective; \emph{quasi-epistemic} if{f} all~$R_a$~are partial equivalence relations. 
\end{definition}
\begin{proposition}
    An equivalence relation $R_a \subseteq W \times W$ partitions $W$ into equivalence classes, or \emph{$a$-clusters}, such that for each equivalence class  $E\subseteq W$ we have $u R_a v$ for any $u, v \in E$ and neither $u R_a u'$ nor $u' R_a u $ for any $u \in E$ and $u' \in W \setminus E$. A partial equivalence relation produces a similar partition but of a subset $W \setminus I$, where $I\subseteq W$ consists of isolated worlds, i.e., neither $i R_a w$ nor $w R_a i$ for any $w \in W$ and $i \in I$.
\end{proposition}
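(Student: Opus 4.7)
The plan is to handle the partial-equivalence case as the main argument, since the equivalence-relation case then follows by observing that reflexivity makes the set of isolated worlds empty.

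First I would isolate the key lemma for partial equivalence relations: define $D \colonequals \{w\in W \mid w R_a w\}$ and $I \colonequals W\setminus D$, and show that any world appearing as either endpoint of an $R_a$-edge must lie in $D$. Concretely, if $w R_a u$, then by symmetry $u R_a w$, and by transitivity $w R_a w$, so $w\in D$; symmetrically $u\in D$. This immediately gives that worlds in $I$ are isolated in the sense claimed: no edge of $R_a$ touches them on either side.

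Next I would verify that $R_a$ restricted to $D$ is an equivalence relation: it is reflexive on $D$ by definition of $D$, and transitivity and symmetry are inherited from~$R_a$. Then I would run the textbook equivalence-class construction on $D$: for $w\in D$, set $[w]_a \colonequals \{u\in D \mid w R_a u\}$; nonemptiness follows from reflexivity on~$D$; the standard symmetric-transitive argument shows any two such classes are either equal or disjoint, and every element of $D$ lies in its own class, so the family of classes partitions $D$. The intra-cluster condition $u R_a v$ for $u,v\in E$ is immediate from the definition of $[w]_a$ combined with symmetry and transitivity, and the inter-cluster condition (no edge between $E$ and $W\setminus E$) follows in two cases: if $u'\in I$, use isolation from the previous step; if $u'\in D\setminus E$, then $u R_a u'$ would force $u'\in [u]_a = E$ by the class definition, contradicting $u'\notin E$.

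Finally, for the equivalence-relation case, reflexivity on all of $W$ gives $D = W$ and $I = \varnothing$, and the partial-equivalence conclusion specializes to exactly the stated statement about $a$-clusters partitioning $W$.

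There is essentially no hard step here; the only place to be careful is the symmetry-plus-transitivity derivation showing that any $R_a$-related world must be self-related, since this is precisely the point where one sees that in a partial equivalence relation the only failure of reflexivity is on worlds that participate in no edges at all.
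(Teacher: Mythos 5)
The paper states this proposition without proof---it is offered as a standard fact---so there is no argument of the authors' to compare yours against. Your proof is correct and is the natural one: the key lemma that any world incident to an $R_a$-edge must be self-related (via symmetry then transitivity) is exactly the right observation, and everything else is the textbook quotient construction on the self-related part $D$. One small wrinkle to be aware of: the paper defines an equivalence relation as reflexive, transitive, and \emph{euclidean}, not as reflexive, symmetric, and transitive. Your final specialization step (``reflexivity makes $I$ empty, so the PER case applies'') therefore tacitly uses that an equivalence relation in the paper's sense is symmetric, hence a partial equivalence relation. That is precisely the content of the proposition immediately preceding this one in the paper (reflexivity plus euclideanity yields symmetry: from $wR_av$ and $wR_aw$ conclude $vR_aw$), so you should either cite it or insert that one-line derivation; with that noted, your argument is complete. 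The only other nitpick is that in the inter-cluster case you treat $uR_au'$ explicitly and leave $u'R_au$ to symmetry, which is fine but worth a half-sentence.
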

\begin{definition}[Composition and iteration of relations]
    For any binary relations $Q,Q' \subseteq W \times W$ on a set~$W$,  their \emph{composition} $Q \circ Q' \ce \{(w,v) \in W \times W \mid (\exists u \in W) ( w Q u \text{ and } u Q' v \}$.
Let $Q^k$ for  $k \geq 0$ be defined recursively by $Q^0 \ce \{(w,w) \mid w \in W\}$   and $Q^{k+1} \ce Q \circ Q^k$.
\end{definition}
\begin{definition}[Mutual and common accessibility]
For a Kripke model $\langle W, R, V\rangle$, we define the  
\begin{itemize}
    \item  \emph{mutual accessibility relation} $R_\agents \ce \bigcup_{a \in \agents} R_a$  that corresponds to the mutual belief of all agents;
    \item \emph{common accessibility relation} $R^*_\agents \ce \bigcup_{k=0}^{\infty} R^k_{\agents}$ that corresponds to the common belief of all agents.
\end{itemize}

\end{definition}

\begin{definition}[Agent's submodel]
\label{def:submodel}
    Let $(\calM,v)$ with $\calM = \langle W, R, V\rangle$ be a pointed model and $i\in \agents$ be an agent with \emph{consistent beliefs}, i.e., such that $R_i(v) \ne \varnothing$. The \emph{submodel accessible by $i$ at\/ $(\calM,v)$}, or \emph{$i$'s~part/submodel of\/ $(\calM,v)$}   is the Kripke model $\calM_v^i \colonequals \langle W', R', V'\rangle$ such that
    \begin{itemize}
        \item $W'\ce \left(R_i \circ R^*_\agents\right)(v)$ is the set of all worlds  that are common-belief accessible from any point in~$R_i(v)$, including all worlds from $R_i(v)$;
        \item $R_j' \colonequals R_j \cap (W' \times W')$ for each $j \in \agents$;
        \item $V'(p) \colonequals V(p)\cap W'$ for each $p \in \atoms$.
    \end{itemize}    
   For reasons of uniformity, we sometimes abuse the terminology and say that  $i$'s submodel is empty when $R_i(v)= \varnothing$.
\end{definition}

One of the key features of distributed systems is the distinction between the global state of the system (represented by the actual world) and the local state of agents, that is usually a limited portion of the global state. To represent this distinction, we typically consider pointed models $(\calM,v)$, where the real world $v$ is not considered possible by any of the agents.
This represents the \textit{fallibilistic} property of fault-tolerant distributed systems that  no agent can be sure to be fault-free and, hence, cannot be sure to perceive the global state of the system correctly~\cite{KuznetsPSF19FroCoS}. This modeling choice does not preclude agents from 
having an accurate view of the system during execution since the possibilities they consider can include a faithful duplicate of the actual world.
More strikingly, nothing prevents the global state of the system to be completely different from what an agent has in the local view, e.g.,~in case of corrupted memory. 
By the same token, not all worlds are equally possible from the local viewpoint of an agent: while some worlds are actual candidates for being the real world, for a given agent, other worlds are only \textit{virtually possible}, i.e.,~they are used to compute higher-order beliefs about other agents. This distinction is crucial in epistemic puzzles: e.g.,~in the more traditional  consecutive numbers problem (\cref{ex:consec}), $a$ considers only the pairs where his number is~$1$ to be actually possible: all other pairs are present in order for him to correctly compute his higher order beliefs about~$b$.
In this paper, we propose an a~priori belief update mechanism that incorporates a fallibilist assumption and also sharply separates, for the updated agent(s), actually possible worlds from virtually possible worlds.\looseness=-1

\section{A~priori Beliefs in Epistemic Logic}
\label{sec:APB_EL}

It is important to differentiate between a~priori beliefs that can and cannot be represented syntactically. We call the former \emph{explicit a~priori beliefs} and define them as beliefs that can be represented by one epistemic formula in the object language. Thus, beliefs that can be represented by a finite set of formulas are explicit because the conjunction of the set provides an equivalent description. 
However, not all a~priori beliefs are explicit. For instance, factivity of  everyone's knowledge is usually described as either a frame property (reflexivity) or an infinite set of formulas ($B_a \phi \to \phi$ for all $a$ and $\phi$) but cannot be reduced to the truth of one formula only. We call such a~priori beliefs  \emph{implicit}. 

In as far as epistemic puzzles and distributed systems deal with a~priori beliefs (usually without calling them that), it is done by providing the initial Kripke models that agents proceed to modify based on the information they receive. In other words, the implicit form of a~priori beliefs is dominant, and converting it into an explicit representation may well be an unrealistic task even for simple epistemic puzzles~\cite{Artemov22}.

\subsection{Privacy of A~priori Beliefs}
\label{subsec:priv}
In the typical modeling of epistemic puzzles,  agents are assumed to have common factual a~priori beliefs, as represented by the commonly known epistemic model~\cite{artemov}. This ensures the homogeneity of reasoning by all agents and creates no problems as long as agents have no need to modify their a~priori beliefs. At the same time, for our setting, it is unreasonable to assume that the internal reasoning process leading one agent to modify its  a~priori beliefs would be noticeable by other agents, let alone be commonly known among them, even in cases where they start with commonly known priors. 

Hence, we abandon the assumption of the common knowledge of the model and treat each agent as having its own private a~priori beliefs represented by the submodel of this agent (cf.~\cref{def:submodel}). If submodels of several agents overlap, this is treated as a coincidence, especially in view of the fact that each agent, being only aware of its own submodel, would be oblivious to any such overlaps.

Note that if the initial model is a connected epistemic model, the submodel of each agent is the whole model. We interpret this situation as each agent believing to have the common model that they have, but leave the possibility of one or several agents being wrong, typically as a result of a later a~priori public update.\looseness=-1

To summarize, (i)~agents' reasoning is represented by a pointed Kripke model $(\calM,v)$, but instead of assuming common knowledge of $(\calM,v)$, an agent $i$ is only guaranteed to know $i$'s submodel $\calM_v^i$, which may or may not be different from submodels $\calM_v^j$ visible by other agents $j$; and (ii) in addition to the standard public update mechanism for public announcements, we employ a \emph{private} model update mechanism for a~priori belief updates of an agent $i$ that results in $i$'s submodel becoming disjoint from other agents' submodels after $i$ performed an a~priori update.

\subsection{A~priori Belief Updates as Conflict Resolution}
\label{subsec:confl}
    We still maintain agents' reasoning within a pointed Kripke model, i.e., each agent $i$ is still logically omniscient w.r.t.~the a~posteriori information contained in $i$'s submodel~$\calM_v^i$ of the given pointed Kripke model~$(\calM,v)$. 
    The observations that the agent $i$ is making during the run may come into  conflict with its  a~priori beliefs, which would manifest as $R_i(v)=\varnothing$ causing $i$'s beliefs to become inconsistent.
    As long as such contradiction does not arise, $i$ continues a~posteriori reasoning in the standard epistemic manner or DEL manner for publicly announced information. In light of the Knowledge of Preconditions Principle~\cite{Mos15TARK}, which states that $B_i \phi$ is a precondition for an action whenever $\phi$ is, inconsistent beliefs make it impossible for the agent to act correctly. Indeed, even if the agent is supposed to choose between mutually exclusive actions $A$ if $\phi$ holds or  $B$ otherwise, the inconsistent agent would have to perform both due to believing everything, including both $\phi$ an $\neg \phi$. This provides a good incentive for the agent to reexamine its a~priori beliefs and 
    try updating  them in such a way as to restore the consistency. 
    Semantically, this is achieved by~$i$ creating a new submodel $\calM_v^i$ (disjoint from the existing model) for itself in  place of the current empty one.
    This is exactly the desired functionality for SASO systems: if an agent finds out during runtime that its a~priori beliefs are inadequate, i.e., if an a~posteriori epistemic update of an agent violates some of its a~priori beliefs, the agent may initiate a~priori reasoning, aiming at finding new a~priori beliefs that comply with the 
    a~posteriori epistemic status. If an appropriate solution is found, an \emph{a~priori belief update} is privately invoked for installing a new private submodel within the current Kripke model.

Needless to say, there are many conceivable ways for resolving conflicts arising from inconsistencies in SASO systems, mainly because, in accordance with the Duhem--Quine thesis~\cite{quine}, it is not always possible to isolate the specific hypothesis (a~priori belief in our case) as the culprit for that inconsistency,
even for a restricted set of explicit a~priori beliefs. 
Thus, we will not try to provide deterministic algorithms for choosing an appropriate a~priori update. At the same time, we do not leave this process completely \emph{ad~hoc}. A new submodel is constructed from several building blocks representing the agent's new guesses regarding  the actually possible worlds, the virtually possible worlds, and their relationship, as described in~\cref{sec:APBupdate}.\footnote{Providing a Kripke model synthesis procedure to produce actually and virtually possible worlds to satisfy specific explicit a~priori assumptions is left for future~work.}\looseness=-1

\subsection{Higher-Order A~priori Reasoning}
\label{subsec:hoAPR}
The question of higher-order belief updates remains outside the scope of this paper. In other words, if an agent $a$ detects some inconsistency in the beliefs of another agent $b$, we do not force $a$ to try and guesstimate an a~priori belief update by $b$, even if agent $b$ is, in $a$'s estimation, likely to perform it.
Since a~priori updates are likely to be be non-deterministic, there is little reason to assume that $a$ would be able to exactly match the thought processes of another agent $b$. 
In Lewis's terminology, once the belief in shared ``inductive standards and background information'' fails, so does the ground for a common understanding of the situation.
In practice, this means that $a$ would generally lose the ability to interpret $b$'s actions or gain information from them\footnote{It might be possible to actually communicate an agent's a~priori beliefs, but of course, such a communication action would correspond to an a~posteriori update  concerning a~priori beliefs. Modeling this complex interaction is outside the scope of this paper.}. In effect, from this moment on, $a$ would treat $b$ as a fully byzantine agent.

\section{A~priori Belief Updates}
\label{sec:APBupdate}
The aim of this section is to describe the semantic mechanism an agent $a$ can use to perform a~priori belief updates, in order to attempt self-recovery when $a$ discovers that its a~posteriori observations came into conflict with its a~priori assumptions. Note that this means that $a$'s part of the model is empty, making it impossible for~$a$ to use  the current pointed model to recover a consistent epistemic state. Much like epistemic puzzles are described by semantic models, the a~priori belief updates are also semantic: agent $a$ tries  to reimagine the epistemic situation as a \emph{trial model} that is based on the agents' previous experiences, modifications of explicit a~priori beliefs, and/or \emph{ad~hoc} guesses. However, $a$ generally has no reason to ascribe these internal attempts to the thinking of other agents. Hence, to model higher-order beliefs for all other agents,  $a$ uses  some \emph{backup model} that is typically derived from $a$'s previous experiences and knowledge of a~priori beliefs of other agents.\footnote{An important exception to this rule is the scenario where $a$ suspects to be the only one to have erred. Then it would be natural for~$a$ to try to adapt itself to the alleged thinking of other agents, using that for both  trial and backup models.}

\subsection{General Update Mechanism}
\label{subsec:generalmech}

We first formulate this update mechanism to be as general as possible, requiring only the basic coherency restrictions on the trial and backup models. In \cref{subsec:heuristics}, we will discuss some strategies autonomous agents may employ to generate these models. We do stress, however, that trial models are intended to be \emph{ad~hoc} guesses. The amount of restrictions on  the trial model  should be inversely proportional to how wrong agent's beliefs are expected to be: the further away from reality the agent may have strayed, the fewer restrictions should be imposed on its imagination in the process of recovering a consistent state.

For instance, in the consecutive numbers puzzle (\cref{ex:consec}), the trial model should not include non-integer numbers or violate laws of arithmetic. On the other hand, if agents are expected to not be fully attentive to the formulation of the puzzle, a trial model may include pairs of numbers that are not consecutive to account for possible misunderstandings.

\begin{definition}[A~priori belief update]
\label{def:APBU}
An \emph{a~priori belief update} for agent $a$ is a tuple 
\begin{equation}    
\label{eq:update}
\U=(\calM^a, U^a, \calM^{\neg a}, \C)
\end{equation}
where
\begin{itemize}
    \item \emph{trial model} $\calM^a = \langle W^a, R^a,  V^a\rangle$ is a quasi-epistemic  Kripke model,
\item  $U^a \subseteq W^a$ is an $a$-cluster within $\calM^a$  (note that $U^a \ne \varnothing$),
\item \emph{backup  model} $\calM^{\neg a} = \langle W^{\neg a}, R^{\neg a}, V^{\neg a}\rangle$ is a quasi-epistemic Kripke model,
\item $S \C W^{\neg a}$ is a \emph{correspondence function} from some subset $S \subseteq W^a$ of the domain of the trial model into the domain of the backup model, i.e., a partial function from~$W^a$~to~$W^{\neg a}$ that identifies some of the trial worlds with backup worlds
\end{itemize}
such that for any atoms $p \in \atoms$ and for any trial worlds $u,v \in W^a$ and backup worlds $u',v' \in W^{\neg a}$  the following \emph{coherency conditions} are fulfilled:
\begin{enumerate}
    \item \emph{atomic coherency}: if $u \C u'$, then $u \in V^a(p)\quad \Longleftrightarrow \quad u' \in V^{\neg a}(p)$, i.e., only propositionally equivalent worlds can be identified;
    \item \emph{reasoning coherency}: for each agent $b \ne a$, if $u \C u'$ and $v \C v'$, then $u R^a_b v \quad \Longleftrightarrow \quad u' R^{\neg a}_b v'$, i.e., the identification respects indistinguishabilities of all agents but $a$;
    \item \emph{simulation coherency}: for each agent $b \ne a$, if $u \C u'$ and $u'R^{\neg a}_b v'$, then there exists $v \in W^a$ such that $u R^a_b v$ and $v \C v'$, i.e., the trial model simulates the backup model for all  agents but~$a$.
    \end{enumerate}
\end{definition}

Intuitively, $U^a$ represents the local state of $a$, i.e., those worlds that $a$ considers actually possible.
Relations $R^a_b$ for $b \ne a$ determine whether these new worlds constructed by $a$ would have been distinguishable for other agents, were they aware of  $a$'s new trial vision of the world. However, since they are unaware of $a$'s trial model $\calM^a$, these indistinguishabilities need to be transferred to the backup model $\calM^{\neg a}$, which represents $a$'s
virtually possible worlds, used by $a$ to understand  how the other agents imagine the epistemic situation.
The coherency conditions on $\C$ ensure 
compatibility between $a$'s actually possible worlds and the virtually possible worlds considered by $a$.
In particular, the trial model should simulate the backup model for agents other than $a$ because  $a$'s  understanding of their epistemic state cannot be worse than $a$'s~impression of their own understanding.

The result of applying an a~priori belief update to a pointed Kripke model is described by the following definition:

\begin{definition}[Result of a~priori belief update]
\label{def:result:APBU}
    Let $\U=(\calM^a, U^a, \calM^{\neg a}, \C)$ be an a~priori belief update with $\calM^a = \langle W^a, R^a, V^a\rangle$ and $\calM^{\neg a} = \langle W^{\neg a}, R^{\neg a}, V^{\neg a}\rangle$,  and let $(\calM,v)$ be a pointed Kripke model with introspective model $\calM = \langle W, R, V\rangle$ such that $R_a(v) = \varnothing$.
    The result of agent $a$ applying  
    $\U$ to  $(\calM,v)$ is a pointed Kripke model 
    $(\calM\circledcirc_a\U ,v)$ where  $\calM\circledcirc_a\U\colonequals \langle W', R', V' \rangle$ such that:
    \begin{itemize}
    \item $ W' \colonequals W \sqcup U^a \sqcup W^{\neg a}$;
    \item $R'_a \colonequals R_a  \sqcup R_a^{\neg a}\sqcup  (\{v\}\times U^a) \sqcup (U^a \times U^a) $;
    \item  
  $      R'_b \colonequals R_b \sqcup R_b^{\neg a} \sqcup \{(u, v') \in U^a \times W^{\neg a} \mid (\exists v\in W^{a}) u R^a_b v \C v'\}$ for each agent $b\ne a$;
    \item $V'(p) \colonequals V(p) \sqcup V^{\neg a}(p) \sqcup \bigl(V^a(p)\cap U^a\bigr) $\qquad for any $p \in \mathit{Prop}$.
    \end{itemize}
\end{definition}

\begin{remark}
    The requirement of simulation coherency is asymmetric. It is worth asking why  it should not be a bidirectional bisimulation instead. Recall that both  trial and backup models describe $a$'s~attempt to provide an alternative explanation for the apparent inconsistency of its beliefs. The difference is that the trial model~$\calM^a$ is $a$'s~suggestion of how things ``actually~are'' whereas the backup model~$\calM^{\neg a}$ is $a$'s~attempt to imagine how  this new picture of the world is viewed by all other agents based on some standard, default, or past views $a$ expects of them. This puts $a$ in a position of intellectual superiority akin to that of the system designer of a distributed system. Within this new point of view of $a$'s creation, agent $a$ \emph{does} know better than the other agents, much like the system designer knows better than distributed agents.\footnote{Note that this refers to the knowledge of the underlying model rather than belief in specific facts. In fact, if $a$ considers more possibilities, $a$ would believe fewer facts, presumably because contracting beliefs was necessary to avoid inconsistency.} Indeed, as mentioned before, it is the  system designer who performs a~priori belief updates in traditional distributed systems, which makes it reasonable for an agent to adapt a similar attitude for the same task.  In $a$'s view, $\calM^a$ is an ``improved'' description of the world, which $a$ hopes to be accurate, whereas $\calM^{\neg a}$ may well be an erroneous description that $a$~is aware of but rejects. If $a$~thinks that other agents should consider certain possibilities, it is only rational for~$a$ to embed these possibilities into its own  view of the world to ensure its accuracy by better describing the epistemic reasoning of other agents. At the same time, $a$~can imagine others having blind spots and missing some possibilities $a$~is now considering. A simulation in the opposite direction, from~$\calM^{\neg a}$ to~$\calM^a$ would have made such blind spots impossible and, hence, is not required by our definition. At the same time, if $a$'s update is guided by the idea that other agents knew the true state of affairs all along and it is $a$ who has been missing something, then the simulation in the opposite direction would make sense. Hence, while not requiring it, our definition does not precludes a bisimulation between~$\calM^a$~and~$\calM^{\neg a}$ either.
\end{remark}

The properties of a~priori belief updates are captured by the following theorem:

\begin{theorem}
\label{Th:cons_Bel}
    Let\/ $\U=(\calM^a, U^a, \calM^{\neg a}, \C)$ be an a~priori belief update 
and\/  $(\calM,v)$ be a pointed Kripke model with introspective model $\calM = \langle W, R, V\rangle$ such that $R_a(v) = \varnothing$. Then:
\begin{enumerate}
    \item $\calM\circledcirc_a\U$ is an introspective model.
    \item For any purely propositional formula $\phi$, 
    \[
    \calM, v \models \phi \qquad\Longleftrightarrow\qquad \calM\circledcirc_a\U, v \models \phi.
    \]
    \item\label{ind_beliefs} For any  formula $\phi$ and any agent $b \ne a$,
    \[
    \calM, v \models B_b\phi \qquad\Longleftrightarrow\qquad \calM\circledcirc_a\U, v \models B_b\phi.
    \]
    \item\label{consist} $\calM, v \models B_a\bot$,\qquad but\qquad $\calM\circledcirc_a\U, v \notmodels B_a\bot$.
\end{enumerate}
\end{theorem}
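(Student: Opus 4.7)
The plan is to handle the four items in sequence, exploiting the disjointness of the three domain pieces $W$, $U^a$, and $W^{\neg a}$ of $W'$ together with the cluster structure the quasi-epistemic models impose on their $a$-relations. A key preliminary fact, used throughout, is that in any introspective model $v\notin R_c(v)$ forces no world to have $v$ in its $R_c$-image: pairing $wR_cv$ with itself yields $vR_cv$ by euclideaness, a contradiction. For $c=a$ this uses the explicit hypothesis $R_a(v)=\varnothing$; for $c=b\ne a$ it uses the paper's standing fallibilistic assumption that $v\notin R_b(v)$.

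For item~(1), I would verify transitivity and euclideaness of each $R'_c$ by case analysis on which summand of the defining disjoint union each edge belongs to. Disjointness of $W$, $U^a$, and $W^{\neg a}$ kills most cross-compositions, leaving only a handful of cases. For $R'_a$ the only interactions of note are those between $\{v\}\times U^a$ and $U^a\times U^a$, which combine cleanly because $U^a$ is an $a$-cluster. For $R'_b$ with $b\ne a$, the two non-trivial verifications are: (i)~composing the ``bridge'' $X \ce \{(u,v')\in U^a\times W^{\neg a}\mid(\exists v\in W^a)\,uR^a_bv\C v'\}$ with $R^{\neg a}_b$, which is precisely what simulation coherency provides; and (ii)~euclideaness at sources $u\in U^a$, which follows from euclideaness of $R^a_b$ in $\calM^a$ together with the reasoning coherency of $\C$.

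Items~(2) and~(4) are brief. For~(2), $V'(p)\cap W=V(p)$ by construction and $v\in W$, so a routine induction over purely propositional $\phi$ gives the equivalence. For~(4), $R_a(v)=\varnothing$ makes $B_a\bot$ vacuously true at $v$ in $\calM$, while $R'_a(v)=U^a\ne\varnothing$ (since an $a$-cluster is non-empty) together with $\bot$ being false at every world forces $B_a\bot$ to fail at $v$ in $\calM\circledcirc_a\U$.

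For~(3), the delicate item, my plan is to prove by induction on $\phi$ that $\calM,w\models\phi \iff \calM\circledcirc_a\U,w\models\phi$ for every $w\in W\setminus\{v\}$; the claim at~$v$ then follows because $R'_b(v)=R_b(v)\subseteq W\setminus\{v\}$ (new $R'_b$-edges all start in $U^a$, and the preliminary places $R_b(v)$ in $W\setminus\{v\}$). The preliminary also closes $W\setminus\{v\}$ under every $R_c$, and inspection of the update shows $R'_c(w)=R_c(w)$ for each $w\in W\setminus\{v\}$, since every new edge of $R'$ originates in $\{v\}$, $U^a$, or $W^{\neg a}$; valuations agree on $W$ by construction. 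The identity map is therefore a bisimulation between the restrictions of $\calM$ and $\calM\circledcirc_a\U$ to $W\setminus\{v\}$. Atomic, Boolean, and $B_c$ cases of the induction are immediate; the $[\psi]\chi$ case reduces, via the induction hypothesis on $\psi$, to the same claim in the announcement-restricted models, which is preserved because $\upd\psi$ respects this bisimulation on the $W$-part. The main obstacle is precisely the reliance on the fallibilistic assumption: without it, a two-world example with $R_b$ reflexive at~$v$ shows that $B_b(B_a\bot)$ need not be preserved, so the proof must invoke this standing convention of the paper (or else the theorem's hypotheses should be strengthened to include it).
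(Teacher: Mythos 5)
Your proof is correct and, for items (1), (2), and (4), follows essentially the same route as the paper's: the same case analysis for introspectivity (simulation coherency discharging transitivity across the bridge $U^a\times W^{\neg a}$, reasoning coherency discharging euclideanity at sources in $U^a$, and the absence of incoming $R_a$-edges at $v$ --- forced by $R_a(v)=\varnothing$ plus euclideanity --- killing the one dangerous composition for $R'_a$), and the same one-line arguments for the propositional and consistency claims. Where you genuinely add something is item (3). The paper justifies it solely by $R'_b(v)=R_b(v)$, which is not by itself sufficient: one also needs that the truth value of $\phi$ is unchanged at every world of $R_b(v)$, and the only world of $W$ whose successors change under the update is $v$ itself. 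You are right that introspectivity alone does not exclude $v\in R_b(v)$ (and by euclideanity $v\in R_b(w)$ for some $w$ already forces $v\in R_b(v)$), and your counterexample is sound: with $R_b=\{(v,v)\}$ and $R_a=\varnothing$ one has $\calM,v\models B_bB_a\bot$ but $\calM\circledcirc_a\U,v\notmodels B_bB_a\bot$ by item (4). The paper is implicitly relying on its standing fallibilistic convention from \cref{sec:FP} that the real world is not considered possible by any agent, which is not restated among the theorem's hypotheses; your bisimulation argument on $W\setminus\{v\}$ makes that dependence explicit and supplies the induction (including the $[\psi]\chi$ case) that the paper compresses into ``easily follows.'' So this is a sharpening of the paper's proof rather than a flaw in yours; the only caveat is that, as you note, the extra hypothesis $v\notin R_b(v)$ must indeed be imported for item (3) to hold as stated.
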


\begin{proof}
Let 
$\calM\circledcirc_a\U= \langle W', R', V' \rangle$.
\begin{enumerate}
    \item Transitivity and euclideanity of $R'_a$ and $R'_b$ for $b \ne a$ follow by construction due to all involved models being introspective. We demonstrate only several  non-trivial cases. (a) It cannot happen that $w R'_a v$ and $v R'_a u$ for some $u \in U^a$ because,  by construction,  $w R'_a v$ is equivalent to $w R_a v$, which  would imply $v R_a v$ by euclideanity of $R_a$, whereas we assumed $R_a(v)= \varnothing$. (b) Let $b \ne a$. If $u R'_b v'$ because $u R^a_b v \C v'$ and $v' R^{\neg a}_b z'$ for some $u \in U^a$, $v \in W^a$, and $v',z' \in W^{\neg a}$, then there exists $z \in W^a$ such that $v R^a_b z \C z'$ by the simulation coherency condition. Hence, $u R^a_b z$  by transitivity of $R^a_b$, which implies $u R'_b z'$, as required. (c)~Let~$b \ne a$. If $u R'_b v'$ because $u R^a_b v \C v'$ and $u R'_b z'$ because $u R^a_b z \C z'$ for some $u \in U$, $v,z \in W^a$, and $v',z' \in W^{\neg a}$, then $v R^a_b z$ by euclideanity of $R^a_b$. Hence, $v' R^{\neg a}_b z'$ by the reasoning coherency condition, as required.
    \item Easily follows from the fact that the propositional valuation at $v$ does not change.
    \item Easily follows from $R'_b(v) = R_b(v)$.
    \item Easily follows from the fact that $R'_a(v) = U^a \ne \varnothing$.\qedhere
\end{enumerate}
\end{proof}

\begin{remark}
    Note that Theorem~\ref{Th:cons_Bel}.\ref{ind_beliefs} means that $a$ performing a successful update of its a~priori beliefs cannot resolve inconsistent beliefs of other agents (though $a$ may erroneously believe to have resolved them). It is reasonable to expect each of the agents with inconsistent beliefs to perform such an operation. We will describe how to do it and why updates of different agents are completely independent from each other in \cref{subsec:example_Simultaneous}.
\end{remark}

\subsection{Application to the Muddy Children Puzzle}
\label{subsec:examples}
We now illustrate this a~priori update mechanism by considering   the muddy children puzzle~\cite{bookof4} with various additional  a~priori assumptions.

\begin{example}[Standard muddy children puzzle, MCP]
    \label{MCEX:1}
In the muddy children puzzle, $n$~children are playing in the mud, and $k$~of them get mud on their foreheads. Each can see whether there is  mud  on others but not on his/her own forehead. Father comes and announces, ``At least one of you has muddy  forehead.'' Father then start repeating the question, ``If you know whether you are muddy or not, step forward.'' Under the assumption that children are perfect reasoners, pay complete attention, are truthful, and all this (and much more) is common knowledge among them, it is well known that the first $k-1$ times Father asks, nobody steps forward. All~$k$~muddy children step forward the $k$th time Father asks, and the remaining $n-k$ clean children step forward the next $k+1$th time.
For instance, for $n=3$ and $k=1$, i.e., with three children playing and one muddy child, the muddy child should immediately step forward, with the other two children stepping forward the second time.

\begin{figure}
\begin{subfigure}{.99\textwidth}
 \centering
  \resizebox{4cm}{!}{%
 \begin{tikzpicture}
\node[nod] at (0,0) [label=below:$AB$] (AB) {};
\node[nod] at (5,0) [label=below:$A$] (A) {};
\node[nod] at (0,4) [label=above:$ABC$] (ABC) {};
\node[nod] at (5,4) [label=above:$AC$] (AC) {};
\node[nod] at (2,2) [label=left:$B$] (B) {};
\node[nod] at (7,2) [label=below:$0$] (000) {};
\node[nod] at (2,6) [label=above:$BC$] (BC) {};
\node[nod] at (7,6) [label=above:$C$] (C) {};
\node [left=1.5cm of BC] {\textbf{\huge{$\calM_0$}}};
\path[<->,{Stealth[scale width=1.5]}-{Stealth[scale width=1.5]}] (ABC)
edge node[above] {$b$} (AC)
edge node[above] {$a$} (BC)
edge node[right] {$c$} (AB)
(A)
edge node[above] {$b$} (AB)
edge node[above] {$a$} (000)
edge node[pos=0.4, right] {$c$} (AC)
(C)
edge node[right] {$c$} (000)
edge node[above] {$a$} (AC)
edge node[above] {$b$} (BC);
\path[dashed, <->,{Stealth[scale width=1.5]}-{Stealth[scale width=1.5]}] (B)
edge node[above] {$b$}(000)
edge node[above] {$a$} (AB)
edge node[pos=0.4, right] {$c$} (BC);
\end{tikzpicture}
}
  \caption{Epistemic model $\calM_0$ for the muddy children puzzle with three agents of \cref{MCEX:1}, before Father's announcements. $D$ in the name of the state corresponds to~$m_d$ being true at the state, i.e., agent $d$ being muddy. No child is muddy in state~$0$. Bidirectional arrows (both dashed and solid) represent indistinguishability for agents, e.g., agent $b$ cannot distinguish between states $ABC$ and $AC$. Reflexive loops, present for all agents at every state, are omitted.}
    \label{fig:EX1-init&MB}
\end{subfigure}

\par\bigskip % force a bit of vertical whitespace

\begin{subfigure}{.99\textwidth}
\resizebox{13cm}{!}{
   \begin{tikzpicture}
\node[nod] at (-10,0) [label=below:$AB$] (110') {};
\node[nod] at (-10,4) [label=above:$ABC$] (111') {};
\node[nod] at (-5,4) [label=above:$AC$] (101') {};
\node[nod] at (-8,6) [label=above:$BC$] (011') {};

\node[nod, fill=gray] at (-5,0) [label=below:$\underline{A}$] (@) {};
\node[nod] at (-1,0) [label=below:$AB$] (AB) {};
\node[nod] at (4,0) [label=below:$A$, label=135:\large{$U^a$}] (A) {};
\node[nod] at (-1,4) [label=above:$ABC$] (ABC) {};
\node[nod] at (4,4) [label=above:$AC$] (AC) {};
\node[nod] at (1,2) [label=left:$B$] (B) {};
\node[nod] at (1,6) [label=above:$BC$] (BC) {};
\node[nod] at (6,6) [label=above:$C$] (C) {};
\node[nod] at (9,0) [label=below:$AB$] (110) {};
\node[nod] at (9,4) [label=above:$ABC$] (111) {};
\node[nod] at (14,4) [label=above:$AC$] (101) {};
\node[nod] at (11,6) [label=above:$BC$] (011) {};
\node (p) [left=2.6cm of 011'] {};
\node (p1) [above=0.8cm of p] {\textbf{\Large{$(\calM, \underline{A})$}}};
\node [right=7.5cm of p1] {\textbf{\Large{$\calM^a$}}};
\node [right=17.5cm of p1] {\textbf{\Large{$\calM^{\neg a}$}}};
\node (r) [above=0.15cm of p] {};
\node (r1) [right=0.2cm of r] {\textbf{\Large{$W$}}};
\node [right=8cm of r1] {\textbf{\Large{$W^a$}}};
\node [right=22.5cm of r1] {\textbf{\Large{$W^{\neg a}$}}};

\node (L) [right=7cm of p1] {};
\node (L1) [below=8cm of L] {};

\path[<->, {Stealth[scale width=1.5]}-{Stealth[scale width=1.5]}]
(ABC)
edge node[above] {$a$} (BC)
edge node[right] {$c$} (AB)
(C)
edge node[above] {$a$} (AC)
edge node[above] {$b$} (BC)
(B)
edge node[above] {$a$} (AB)
edge node[right, , pos=.3] {$c$} (BC)
(AC)
edge node[above] {$b$} (ABC)
edge node[right] {$c$} (A)
(A)
edge node[above] {$b$} (AB)
(111)
edge node[above] {$b$} (101)
edge node[above] {$a$} (011)
edge node[right] {$c$} (110)
(111')
edge node[above] {$b$} (101')
edge node[above] {$a$} (011')
edge node[right] {$c$} (110');

\path[->, -{Stealth[scale width=1.5]}] (@)
edge[] node[right] {$c$} (101')
edge[] node[above] {$b$} (110');

\path[->,-{Stealth[scale width=1.5]}] 
          (011) edge[loop left] node[left] {$a,b,c$} (011)
          (111) edge[loop left] node[left] {$a,b,c$} (111)
          (110) edge[loop left] node[left] {$a,b,c$} (110)
          (101) edge[loop below] node[below] {$a,b,c$} (101);
\path[->,-{Stealth[scale width=1.5]}] 
          (BC) edge[loop left] node[left] {$a,b,c$} (BC)
          (ABC) edge[loop left] node[left] {$a,b,c$} (ABC)
          (AB) edge[loop left] node[left] {$a,b,c$} (AB)
          (A) edge[loop right] node[right] {$a,b,c$} (A)
          (B) edge[loop right] node[right] {$a,b,c$} (B)
          (C) edge[loop right] node[right] {$a,b,c$} (C)
          (AC) edge[loop right] node[right] {$a,b,c$} (AC);
\path[->,-{Stealth[scale width=1.5]}] 
          (011') edge[loop left] node[left] {$a,b,c$} (011')
          (111') edge[loop left] node[left] {$a,b,c$} (111')
          (110') edge[loop left] node[left] {$a,b,c$} (110')
          (101') edge[loop right] node[right] {$a,b,c$} (101');

\path[dashed]
(ABC) edge[bend left=15] node[] {} (111)
(BC) edge[bend left=15] node[] {} (011)
(AC) edge[bend right=15] node[] {} (101)
(AB) edge[bend right=15] node[] {} (110)
;

\path[line width=0.5]
(L) edge node[] {} (L1);

\pgfdeclarelayer{background layer}
\pgfdeclarelayer{background layer}
\pgfsetlayers{background layer, background layer, main}

\begin{pgfonlayer}{background layer}
      
\filldraw [line width=15mm,line join=round,black!10]
      (BC.north  -| C.east)  rectangle (AB.south  -| AB.west);
      
\filldraw [line width=15mm,line join=round,black!10]
      (011.north  -| 101.east)  rectangle (110.south  -| 110.west);

\filldraw [line width=15mm,line join=round,black!10]
      (011'.north  -| @.east)  rectangle (110'.south  -| 110'.west);
\end{pgfonlayer}

\begin{pgfonlayer}{background layer}
    \filldraw [line width=12mm,line join=round,black!20]
      (A.north  -| A.east)  rectangle (A.south  -| A.west);
\end{pgfonlayer}

\end{tikzpicture}
}
   \caption{Left: Initial pointed Kripke model $\calM,\underline{A}$ restricted to $\APB_2$.\\ 
   Middle and Right: Elements of a~priori belief update $\U = (\calM^a, U^a, \calM^{\neg a}, \C)$ for agent $a$ from \cref{MCEX:2}. The correspondence relation $\C$ is represented by dashed lines.}
    \label{fig:EX1.1}
\end{subfigure}

\par\bigskip % force a bit of vertical whitespace

\begin{subfigure}{.99\textwidth}
\resizebox{13cm}{!}{%
     \begin{tikzpicture}
\node[nod] at (-10,0) [label=below:$AB$] (110') {};
\node[nod] at (-10,4) [label=above:$ABC$] (111') {};
\node[nod] at (-5,4) [label=above:$AC$] (101') {};
\node[nod] at (-8,6) [label=above:$BC$] (011') {};

\node[nod, fill=gray] at (-5,0) [label=below:$\underline{A}$] (@) {};

\node[nod] at (4.5,0) [label=below:$A$, label=135:\Large{$U^a$}] (A) {};

\node[nod] at (9,0) [label=below:$AB$] (110) {};
\node[nod] at (9,4) [label=above:$ABC$] (111) {};
\node[nod] at (14,4) [label=above:$AC$] (101) {};
\node[nod] at (11,6) [label=above:$BC$] (011) {};

\node (p) [left=2cm of 011'] {};
\node (p1) [above=0.2cm of p] {\textbf{\Large{$W$}}};
\node [right=18cm of p1] {\textbf{\Large{$W^{\neg a}$}}};
\node [above=0.5 of p1] {\textbf{\Large{$(\calM \circledcirc_a \U, \underline{A})$}}};

\path[<->, {Stealth[scale width=1.5]}-{Stealth[scale width=1.5]}]
(111)
edge node[above] {$b$} (101)
edge node[above] {$a$} (011)
edge node[right, , pos=.4] {$c$} (110)
(111')
edge node[above] {$b$} (101')
edge node[above] {$a$} (011')
edge node[right] {$c$} (110');

\path[->, -{Stealth[scale width=1.5]}] 
(@)
edge[] node[right] {$c$} (101')
edge[] node[above] {$b$} (110')
edge[] node[above] {$a$} (A)
(A)
edge[] node[below, pos=.6] {$c$} (101)
edge[] node[above] {$b$} (110)
;

\path[->,-{Stealth[scale width=1.5]}] 
          (011) edge[loop left] node[left] {$a,b,c$} (011)
          (111) edge[loop left] node[left] {$a,b,c$} (111)
          (110) edge[loop right] node[right] {$a,b,c$} (110)
          (101) edge[loop below] node[below] {$a,b,c$} (101)
          (011') edge[loop left] node[left] {$a,b,c$} (011')
          (111') edge[loop left] node[left] {$a,b,c$} (111')
          (110') edge[loop left] node[left] {$a,b,c$} (110')
          (101') edge[loop right] node[right] {$a,b,c$} (101')
          (A) edge[loop above] node[above] {$a$} (A)
          ;

\begin{pgfonlayer}{background}

\filldraw [line width=12mm,line join=round,black!10]
      (A.north  -| A.east)  rectangle (A.south  -| A.west);
      
\filldraw [line width=15mm,line join=round,black!10]
      (011.north  -| 101.east)  rectangle (110.south  -| 110.west);

\filldraw [line width=15mm,line join=round,black!10]
      (011'.north  -| @.east)  rectangle (110'.south  -| 110'.west);
\end{pgfonlayer}

\end{tikzpicture}
}
   \caption{Updated Kripke model $(\calM\circledcirc_a\U ,\underline{A})$ from \cref{MCEX:2}. In particular, arrows from $U^a$ to worlds in $W^{\neg a}$ are drawn in accordance to the correspondence relation $\C$, and all worlds in $U^a$ are reachable from $\underline{A}$ via $a$-arrows as per \cref{def:result:APBU}.}
    \label{fig:EX1.3}
\end{subfigure}
\caption{}
\end{figure}

We will use the standard epistemic modeling where the children are  agents $a, b, c, \ldots $. The muddiness of child $a$ is represented by atom $m_a$ that is true if{f} $a$ is muddy.  The fact that, before Father's first announcement, it is common knowledge that children do not know their own state but know that of others is formalized by requiring the validity of $\lnot B_a m_a \land \lnot B_a \neg m_a$ for all children $a$ and of $B_a m_b \lor B_a \neg m_b$ for  $a \ne b$.
A Kripke model satisfying these and other requirements of the puzzle
for the three agents is represented in~\cref{fig:EX1-init&MB}. Father's role is that of an external announcer, making his epistemic state irrelevant to the solution of the puzzle. From a distributed perspective, his role is akin to that of the system designer. While typically modeled as a public announcement, his first announcement ``at least one of you has muddy forehead''  can also be viewed as 
an a~priori belief update of the system performed by the system designer.
\end{example}

\begin{example}[MCP with false a~priori assumption resolved by  a~priori belief update]
\label{MCEX:2}
Consider a variant of the muddy children puzzle where (after playing the  same game every day) three children $a$, $b$, and $c$ have come to the conclusion that at least two of them always get muddy, i.e.,~they formed a common a~priori belief \looseness=-1
\[
\APB_2=(m_a \land m_b) \lor (m_a \land m_c) \lor (m_b \land m_c).
\]
(Note that $\APB_2$ does not represent \emph{all} common a~priori beliefs of the children. Most of those are encoded in  epistemic model $\calM_0$.)

In terms of Kripke modeling, this can be viewed as a public announcement of $\APB_2$, which results in  model $\calM_0$ from \cref{fig:EX1-init&MB} they used to commonly consider as the base model shrinking to  the Kripke model of four worlds only ($AB$, $AC$, $BC$, and $ABC$), making it unnecessary for Father to announce anything.   Suppose that this time around, contrary to habit, only $a$ is muddy, i.e., $m_a \land \neg m_b \land \neg m_c$. Thus, their common a~priori belief $\APB_2$ turns out to be false.  This scenario is depicted as pointed model $(\calM, \underline{A})$ in \cref{fig:EX1.1} (Left) that is obtained from the four-world model by adding a real world $\underline{A}$\footnote{It is named $A$ because its propositional valuation is the same as that of $A$ in $\calM_0$, while the underline means it is the actual world.} and drawing arrows from it to the four worlds in accordance with what children would consider possible. Note that there are now one-directional arrows due to the fact that children have false beliefs as the result of false a~priori assumptions.

While beliefs of all agents are false, only $a$ can detect this because $a$'s beliefs are inconsistent, unlike those of $b$ and $c$: 
 \[\calM,\underline{A} \models B_a \bot, \text{ while } \calM,\underline{A} \notmodels B_b \bot \text{ and } \calM,\underline{A} \notmodels B_c \bot.
 \]
 Thus, $a$ is the only child who is justified to update her a~priori beliefs. Note that no public announcement was made, and that $a$'s inconsistency is the result of $a$'s reasoning about the epistemic situation.

The introduction of one-directional arrows changes the whole interpretation of the epistemic situation. 
Indeed, using the assumption of the common knowledge of the model, it could be tempting to suggest that agents $b$ and $c$ can detect $a$'s inconsistency. 
It would not, however, match the underlying scenario. 
For instance, when $b$ sees that $a$ is muddy while $c$ is not, we would expect $b$ to conclude that $b$ is the second muddy child and that $a$ sees the mud on $b$'s forehead (the model says as much by providing a unidirectional $b$-arrow from $\underline{A}$ to $AB$). 
To match this intuition, we are forced to abandon the assumption of common knowledge of the model. 
For pointed model $(\calM,\underline{A})$, agent $b$ is only aware of the world $AB$, the only possible for $b$, and all worlds accessible from it by a sequence of arrows, agent $c$ is only aware of the world~$AC$, the only possible for $c$, and all worlds accessible from it by a sequence of arrows, while $a$ is not aware of any worlds, resulting in inconsistent beliefs. In particular, $a$ has lost the ability to examine the reasoning of other agents.
While it is natural for $a$ to assume that they still operate within the last commonly considered model, i.e., the four-world $\calM^{\neg a}$,  connecting this assumption to $a$'s own reality requires this reality to be fleshed out by~$a$, which is the purpose of $\calM^a$ (and, of its restriction $U^a$).

Let $\calM^a$ be as depicted in \cref{fig:EX1.1} (Middle), which fits well  with what she observes where the singleton cluster  $U^a=\{A\}$ of worlds from $\calM^a$  represents the possibilities $a$ considers actually possible based on her observations. As already mentioned,  for the backup model, which $a$ uses for computing what is considered by $b$ and $c$, agent $a$ takes the four-world model $\calM^{\neg a}$ she herself used until recently.
The elements of the a~priori belief update $\U = (\calM^a, U^a, \calM^{\neg a}, \C)$ are depicted in \cref{fig:EX1.1} (Right).
In particular, since $\calM^a$ extends $(\calM,\underline{A})$ with three additional worlds where only one agent is muddy, agent $a$ establishes the correspondence function~$\C$ to connect each world from $\calM^a$ to a  propositionally equivalent world of $\calM^{\neg a}$ if such a world exists. It is easy to see that the coherency conditions for the a~priori belief update are fulfilled.
According to Def.~\ref{def:APBU}, the result $(\calM\circledcirc_a\U ,\underline{A})$ of applying the update to the initial model  is shown in \cref{fig:EX1.3}.
The two arrows from $A \in U^a$ for $b$ and $c$ to worlds in $W^{\neg a}$ are added because of the correspondence relation~$\C$. For instance, the $b$-arrow from $A$ in $U^a$ to $AB$ in $\calM^{\neg a}$ of the updated model is due to the $b$-arrow from $A$ to $AB$ in $\calM^a$ and the correspondence from $AB$ in $\calM^a$ to $AB$ in $\calM^{\neg a}$.
As a result 
\begin{align*}
\calM\circledcirc_a\U ,\underline{A}  &\models B_b (m_a \land  m_b \land \neg m_c),
&
\calM\circledcirc_a\U ,\underline{A}  &\models B_aB_b (m_a \land  m_b \land \neg m_c),
\\
\calM\circledcirc_a\U ,\underline{A}  &\models B_c (m_a \land \neg m_b \land  m_c),
&
\calM\circledcirc_a\U ,\underline{A}  &\models B_aB_c (m_a \land \neg m_b \land  m_c).
\\
\calM\circledcirc_a\U ,\underline{A} &\models B_a (m_a \land \neg m_b \land \neg m_c),
&
\calM\circledcirc_a\U ,\underline{A}  &\models B_bB_a (m_a \land  m_b \land  \neg m_c),
\\
&&\calM\circledcirc_a\U ,\underline{A}  &\models B_cB_a (m_a \land  \neg m_b \land  m_c),
\end{align*}
In other words, each agent believes to know the actual situation and to be muddy. All three of them will step forward when Father prompts. Moreover, the a~priori update restored $a$'s ability to understand the reasoning of others. Because $a$ guessed $\calM^{\neg a}$ correctly, she can correctly interpret $b$ and $c$ stepping forward. Agents~$b$~and~$c$ expect $a$ to step forward. On the other hand, $b$~cannot interpret $c$~stepping forward  because 
$
\calM\circledcirc_a\U ,\underline{A} \models B_b (\neg B_c m_c \land \neg B_c \neg m_c)$ and vice versa.

\end{example}

\subsection{A~priori Belief Update Triggered by  Public Announcements}
\label{subsec:example_PA}

One could say that \cref{MCEX:2} should have been modeled according to \cref{fig:EX1.3} from the very beginning because it fits better with the evidence observed by the agents than model~$\calM^{\neg a}$ from \cref{fig:EX1.1} they attempt to use. That would not address  the question of how to turn an epistemic description of agents' observations and beliefs in complex epistemic scenarios, possibly with  false common beliefs of a subset of agents. Our a~priori belief updates provide a mechanism for building models for more complex epistemic situations based on existing models for simpler scenarios.

It is harder to find an alternative to a~priori belief updates when the inadequacy of the a~priori assumptions cannot be observed initially and is only uncovered through communication. As already mentioned, the idea of scrapping the original model and starting anew corresponds to the development cycle of ordinary distributed systems, where it is performed by a system designer. We aim at developing a mechanism for individual agents to do it, as befits SASO systems. Moreover, as we saw in  \cref{MCEX:2}, different agents are likely to discover the flaw in their assumptions at different times, which puts the idea of scrapping the \emph{whole} model at odds with the Knowledge of Preconditions Principle~\cite{Mos15TARK} that requires an agent performing an action to know the reason for this action.

This was the situation in \cref{ex:consec} where it was $a$'s public announcement of knowing $b$'s number that caused $b$ to realize that some a~priori assumptions must have been wrong. We now show how to perform a~priori belief update in such a situation by representing $b$'s  reasoning in \cref{ex:consec} in our formal framework.

\begin{example}[Consecutive numbers with false a~priori assumptions formalized]
\label{ex:consec2}
Let us describe the twist version of \cref{ex:consec}  as a Kripke model. 
As before, we label  worlds according to pairs of numbers held by the agents. 
Although this  results in several worlds having  the same label, we prefer to disambiguate when necessary rather than creating an overly complex nomenclature for worlds. 
As before, the actual world~$\underline{(1,2)}$ is distinguished by the underline. 
Unlike the MCP, an important a priori assumption is not common between agents $a$ and $b$: $a$ starts natural numbers from $1$ while $b$ also considers $0$ to be natural. 
At the same time, none of them are initially aware of this disagreement. 
Hence, pointed Kripke model $\left(\calN,\underline{(1,2)}\right)$ in \cref{fig:EX4.1}~(Above) is introspective but not epistemic and consists of the actual world and two disjoint fragments, each only accessible to one of the two agents: the upper line of worlds represents what $a$ (mistakenly) thinks is their common view of the situation, while the lower line of worlds represents what $b$ (mistakenly) takes to be their common view. 
To distinguish similarly labeled worlds, we call the worlds from the upper~(lower)~line  $a$-worlds ($b$-worlds). To simplify notation without affecting the logical content of the puzzle, suppose $a$ states $b$'s number explicitly by saying ``I know that you have number $2$,'' which we represent by formula $B_a 2_b$. 
It is easy to see that $\calN, w \notmodels B_a 2_b$ for any $b$-world~$w$; of the $a$-worlds, only  world~$(1,2)$ satisfies $B_a 2_b$, and so does the actual world.
Accordingly, the result of a public update with~$B_a 2_b$, as  depicted in \cref{fig:EX4.1} (Below),  contains only two worlds and $\calN\upd B_a 2_b, \underline{(1,2)} \models B_b \bot$, prompting the ``Wait, what?'' comment in our informal rendition and triggering an a~priori update by $b$.

\begin{figure}
\begin{subfigure}{.99\textwidth}
 \centering
  \resizebox{8cm}{!}{
    \begin{tikzpicture}
\node[] at (0,0) [label=135:\textbf{\large{$\left(\calN, \underline{(1,2)}\right)$}}] (@) {$\underline{(1,2)}$};

\node (A) at (4,1) {$(1,2)$};
\node (A1) [right=2cm of A] {$(3,2)$};
\node (A2) [right=2cm of A1] {$(3,4)$};
\node (A3) [right=0.5cm of A2] {$\ldots$};

\node (B) at (2,-1.5) {$(1,0)$};
\node (B1) at (4,-1.5) {$(1,2)$};
\node (B2) [right=2cm of B1] {$(3,2)$};
\node (B3) [right=2cm of B2] {$(3,4)$};
\node (B4) [right=0.5cm of B3] {$\ldots$};

\node[] at (0,-6) [label=135:\textbf{\large{$\left(\calN\upd B_a 2_b, \underline{(1,2)}\right)$}}] (@') {$\underline{(1,2)}$};

\node (A') at (4,-5) {$(1,2)$};
%ghost node for scaling with above figure
\node (A3') [right=4.5cm of A'] {};

\path[->, -{Stealth[scale width=1.5]}] 
(@) edge node[above] {$a$} (A)
(@) edge node[below] {$b$} (B1)
(@) edge node[above] {$b$} (B2);

\path[<->,{Stealth[scale width=1.5]}-{Stealth[scale width=1.5]}] 
(A) edge node[above] {$b$} (A1)
(A1) edge node[above] {$a$} (A2)
(B) edge node[below] {$a$} (B1)
(B1) edge node[below] {$b$} (B2)
(B2) edge node[below] {$a$} (B3)
;

\path[->,-{Stealth[scale width=1.5]}] 
          (A) edge[loop above] node[above] {$a,b$} (A)
          (A1) edge[loop above] node[above] {$a,b$} (A1)
          (A2) edge[loop above] node[above] {$a,b$} (A2)
          (B) edge[loop below] node[below] {$a,b$} (B)
          (B1) edge[loop below] node[below] {$a,b$} (B1)
          (B2) edge[loop below] node[below] {$a,b$} (B2)
          (B3) edge[loop below] node[below] {$a,b$} (B3)
          ;

\path[->, -{Stealth[scale width=1.5]}] 
(@') edge node[above] {$a$} (A');

\path[->,-{Stealth[scale width=1.5]}] 
          (A') edge[loop above] node[above] {$a,b$} (A')
          ;

\end{tikzpicture}
}
    \caption{Above: Introspective pointed Kripke model $\left(\calN, \underline{(1,2)}\right)$ representing the initial state of both agents in the consecutive numbers puzzle from \cref{ex:consec2}, where $b$ considers $0$ to be a natural number while $a$ does not.\\
    Below: $\left(\calN\upd B_a 2_b, \underline{(1,2)}\right)$ after  $a$'s public announcement that $a$ knows $b$'s number is 2.}
    \label{fig:EX4.1}
\end{subfigure}

\par\bigskip % force a bit of vertical whitespace

\begin{subfigure}{.99\textwidth}
\centering
\resizebox{7.5cm}{!}{
    \begin{tikzpicture}

\node (B) at (0,0) [label=135:\textbf{\large{$\calM^b$}}] {$(1,0)$};
\node (B1) [right=2cm of B, label=135:\textbf{\large{$U^b$}}] {$(1,2)$};
\node (B2) [right=2cm of B1] {$(3,2)$};
\node (B3) [right=2cm of B2] {$(3,4)$};
\node (B4) [right=0.5cm of B3] {$\ldots$};

\node (A) [below=2cm of B1, label=135:\textbf{\large{$\calM^{\neg b}$}}] {$(1,2)$};
\node (A1) [right=2cm of A] {$(3,2)$};
\node (A2) [right=2cm of A1] {$(3,4)$};
\node (A3) [right=0.5cm of A2] {$\ldots$};

\path[<->,{Stealth[scale width=1.5]}-{Stealth[scale width=1.5]}] 
(A) edge node[above] {$b$} (A1)
(A1) edge node[above] {$a$} (A2)
(B) edge node[below] {$a$} (B1)
(B1) edge node[below] {$b$} (B2)
(B2) edge node[below] {$a$} (B3)
;

\path[dashed] 
(B1) edge node[] {} (A)
(B2) edge node[] {} (A1)
(B3) edge node[] {} (A2)
;

\path[->,-{Stealth[scale width=1.5]}] 
          (A) edge[loop below] node[below] {$a,b$} (A)
          (A1) edge[loop below] node[below] {$a,b$} (A1)
          (A2) edge[loop below] node[below] {$a,b$} (A2)
          (B) edge[loop above] node[above] {$a,b$} (B)
          (B1) edge[loop above] node[above] {$a,b$} (B1)
          (B2) edge[loop above] node[above] {$a,b$} (B2)
          (B3) edge[loop above] node[above] {$a,b$} (B3)
          ;

\begin{pgfonlayer}{background}
\filldraw [line width=16mm,line join=round,black!10]
      (B1.north  -| B2.east)  rectangle (B1.south  -| B1.west);

\end{pgfonlayer}
\end{tikzpicture}
}
    \caption{Elements of the a~priori belief update $\U = (\calM^b, U^b, \calM^{\neg b}, \C)$ for $b$ in \cref{ex:consec2}. The correspondence relation is represented by dashed lines. The grey rectangle is $b$'s equivalence class $U^b$.}
    \label{fig:EX4.2}
\end{subfigure}

\par\bigskip % force a bit of vertical whitespace

\begin{subfigure}{.99\textwidth}
    \resizebox{12cm}{!}{
    \begin{tikzpicture}

\node (@') at (-6.3,3.5) [label=135:\textbf{\large{$\left( (\calN\upd B_a 2_b) \circledcirc_b \U, \underline{(1,2)}\right)$}}] {};

\node (C') at (-10, 2) [label=135:\textbf{\large{$W$}}] {$\underline{(1,2)}$};
\node (C1') [right=2cm of C'] {$(1,2)$};

\node (B1') [below=2cm of C1', label=90:\textbf{\large{$U^b$}}] {$(1,2)$};
\node (B2') [right=2cm of B1'] {$(3,2)$};

\node (A') [below=2cm of B1', label=135:\textbf{\large{$W^{\neg b}$}}] {$(1,2)$};
\node (A1') [right=2cm of A'] {$(3,2)$};
\node (A2') [right=2cm of A1'] {$(3,4)$};
\node (A3') [right=0.5cm of A2'] {$\ldots$};

\node (@) [right=14.5cm of @', label=135:\textbf{\large{$\left(\bigl((\calN\upd B_a 2_b)  \circledcirc_b \U \bigr) \updpriv{b} B_a 2_b, \underline{(1,2)}\right)$}}] {};
\node (C) [right=14cm of C', label=135:\textbf{\large{$W$}}] {$\underline{(1,2)}$};
\node (C1) [right=2cm of C] {$(1,2)$};
\node (B1) [below=2cm of C1, label=45:\textbf{\large{$U^b\upd B_a 2_b$}}] {$(1,2)$};
\node (A) [below=2cm of B1, label=135:\textbf{\large{$W^{\neg b}\upd B_a 2_b$}}] {$(1,2)$};

\node (L) [left=6cm of @] {};
\node (L1) [below=8cm of L] {};

\path[thick]
(L) edge node[] {} (L1);

\path[<->,{Stealth[scale width=1.5]}-{Stealth[scale width=1.5]}] 
(A') edge node[above] {$b$} (A1')
(A1') edge node[above] {$a$} (A2')
(B1') edge node[below] {$b$} (B2')
;

\path[->,-{Stealth[scale width=1.5]}] 
(C') edge node[above] {$b$} (B1')
(C') edge node[above, pos=.6] {$b$} (B2')
(C') edge node[above] {$a$} (C1')
(B1') edge node[right] {$a$} (A')
(B2') edge node[right] {$a$} (A1')
(B2') edge node[right] {$a$} (A2')
;

\path[->,-{Stealth[scale width=1.5]}] 
(C) edge node[above] {$b$} (B1)
(C) edge node[above] {$a$} (C1)
(B1) edge node[right] {$a$} (A);

\path[->,-{Stealth[scale width=1.5]}] 
          (C1') edge[loop above] node[above] {$a,b$} (C1')
          (A') edge[loop below] node[below] {$a,b$} (A')
          (A1') edge[loop below] node[below] {$a,b$} (A1')
          (A2') edge[loop below] node[below] {$a,b$} (A2')
          (B1') edge[loop left] node[left] {$b$} (B1')
          (B2') edge[loop right] node[right] {$b$} (B2')
;

\path[->,-{Stealth[scale width=1.5]}] 
          (C1) edge[loop above] node[above] {$a,b$} (C1)
          (A) edge[loop below] node[below] {$a,b$} (A)
          (B1) edge[loop left] node[left] {$b$} (B1)
          ;

\begin{pgfonlayer}{background}
\filldraw [line width=16mm,line join=round,black!10]
      (B1'.north  -| B2'.east)  rectangle (B1'.south  -| B1'.west);
      \filldraw [line width=16mm,line join=round,black!10]
      (C'.north  -| C1'.east)  rectangle (C'.south  -| C'.west);
      \filldraw [line width=16mm,line join=round,black!10]
      (A'.north  -| A3'.east)  rectangle (A'.south  -| A'.west);
      \filldraw [line width=16mm,line join=round,black!10]
      (B1.north  -| B1.east)  rectangle (B1.south  -| B1.west);
      \filldraw [line width=16mm,line join=round,black!10]
      (C.north  -| C1.east)  rectangle (C.south  -| C.west);
      \filldraw [line width=16mm,line join=round,black!10]
      (A.north  -| A.east)  rectangle (A.south  -| A.west);

\end{pgfonlayer}
\end{tikzpicture}
}
    \caption{Left: Result of $b$ applying a~priori belief update in the inconsistent epistemic state $\left(\calN\upd B_a 2_b, \underline{(1,2)}\right)$ of \cref{fig:EX4.1}.\\ 
    Right: Result of $b$ (re)applying $a$'s public announcement of $B_a 2_b$ to $b$'s part of the model.}
\label{fig:EX4.3}
\end{subfigure}
\caption{}
\end{figure}

To match the informal reasoning from \cref{ex:consec}, consider the nature of this public announcement: it was about $a$'s beliefs rather than propositional facts. Hence, it is rational for $b$ to return to the original model she considered before the public announcement: model~$\calM^b$ from \cref{fig:EX4.2} is nothing but $b$-worlds from \cref{fig:EX4.1} (Above) and $U^b$ is the $b$-cluster of worlds $(1,2)$ and $(3,2)$ that $b$ considered possible pre-announcement. What $b$ should modify is how $a$ sees the situation. For that $b$ correctly imagines the abbreviated natural numbers, resulting in $\calM^{\neg b}$ being the same as $a$-worlds from \cref{fig:EX4.1} (Above). Once again, the correspondence is determined by propositional truth of atoms $n_a$ and $m_b$. It is easy to see that all coherency conditions are satisfied. 

The result of $b$ applying a~priori update $\U$ to $\left(\calN\upd B_a 2_b,\underline{(1,2)}\right)$ is shown in \cref{fig:EX4.3} (Left). But this is not the model explaining why $b$ now knows $a$'s number. Indeed, this cannot be the final stage of $b$'s reevaluation since $b$'s part of this model does not reflect the public announcement $a$ made. Agent $b$ still needs to apply the standard public announcement update to the only part of the model it is aware of, which is comprised of worlds from $U^b$ and $W^{\neg b}$. This leaves $b$'s part of the model with two worlds only, $(1,2)$ from $U^b$ and $(1,2)$ from $W^{\neg b}$, as shown in \cref{fig:EX4.3} (Right). It can be easily seen that $\left((\calN\upd B_a 2_b)\circledcirc_b \U\right) \updpriv{b} B_a 2_b, \underline{(1,2)} \models B_b 1_a$, explaining that $b$ has now figured out $a$'s number too.

Such a ``private'' public announcement update $\updpriv{b}$ may not be standard, but it is warranted in this setting. Recall that $b$ here assumes the role of the system designer and treats her part of the model as the whole model and the only model pre-announcement. It is natural then for $b$ to update this partial model as if it is the whole model. This operation should not affect $a$'s part of the model since $b$ is not aware of it.
\end{example}

If, in the a~priori updated model after the public announcement there is at least one world in $U^a$ that makes the announced formula true, then the self-recovery operation is indeed successful, as stated by the following corollary:

\begin{corollary}
\label{cor:self-recov}
    Let $\U=(\calM^a, U^a, \calM^{\neg a}, \C)$ be an a~priori belief update triggered by the public announcement of $\phi$
and  $(\calM,v)$ be a pointed Kripke model with introspective model $\calM = \langle W, R, V\rangle$ pre-an\-nounce\-ment. 
If $(\calM\upd\phi) \circledcirc_a \U, u \models \phi$ for some world $u \in U^a$, then $a$'s beliefs are consistent after the a~priori update, i.e., $\bigl((\calM \upd \phi) \circledcirc_a \U\bigr) \updpriv{a} \varphi, v  \notmodels B_a \bot$.
\end{corollary}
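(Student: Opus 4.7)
The plan is to unwind the two updates and verify that at least one $a$-successor of $v$ survives. First, using Definition~\ref{def:result:APBU}, compute $R'_a(v)$ in $(\calM\upd\phi)\circledcirc_a \U$, where $R'_a$ denotes the $a$-accessibility after the a~priori update. By construction $R'_a = R_a \sqcup R_a^{\neg a} \sqcup (\{v\}\times U^a) \sqcup (U^a \times U^a)$, and by the premise of \cref{def:result:APBU} we have $R_a(v)=\varnothing$ in $\calM\upd\phi$ (i.e., the public announcement rendered $a$'s beliefs inconsistent, which is precisely the situation that triggers an a~priori update). Therefore $R'_a(v) = U^a$, and the same non-emptiness is already the content of \cref{Th:cons_Bel}.\ref{consist}.

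Next, I would analyze $\updpriv{a}$. As explained in \cref{ex:consec2}, this operator applies the standard PAL-style submodel restriction by the announced formula, but only to $a$'s submodel (the worlds in $U^a \cup W^{\neg a}$ and the $a$-edges inside them), leaving the remainder of the Kripke model—and in particular $v$ itself and its outgoing $a$-edges—structurally intact except that edges pointing into removed worlds disappear. Consequently, the $a$-successors of $v$ in $\bigl((\calM\upd\phi)\circledcirc_a \U\bigr)\updpriv{a}\phi$ are exactly those $w \in U^a$ that satisfied $\phi$ in $(\calM\upd\phi)\circledcirc_a \U$.

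Now the corollary's hypothesis supplies such a world: there exists $u \in U^a$ with $(\calM\upd\phi)\circledcirc_a \U, u \models \phi$. Hence $u$ is retained by $\updpriv{a}\phi$, the edge $(v,u)$ is retained because both endpoints survive, and so the $a$-accessibility of $v$ in the final model contains $u$ and is therefore non-empty. This immediately gives $\bigl((\calM\upd\phi)\circledcirc_a \U\bigr)\updpriv{a}\phi, v \notmodels B_a\bot$, as required.

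The only delicate point—really the main obstacle to formalization rather than to proof—is pinning down the precise semantics of $\updpriv{a}$, since in the excerpt it is introduced informally through \cref{ex:consec2}. Once one fixes the reading that $\updpriv{a}\phi$ is the standard PAL update restricted to $a$'s submodel (leaving $v$ and the non-$a$ portion of the model untouched), the argument above goes through with essentially no calculation: the whole content is that the hypothesis produces a world in $U^a \cap \llbracket\phi\rrbracket$ that survives the private restriction and keeps $v$'s $a$-successor set non-empty.
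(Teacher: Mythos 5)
Your argument is correct and matches the paper's intended reasoning: the paper's own proof is just a one-line citation of Theorem~\ref{Th:cons_Bel}.\ref{consist}, and what you have written is exactly the unpacking of that citation together with the observation that the hypothesized world $u \in U^a$ satisfying $\phi$ survives the private re-application of the announcement, keeping $R'_a(v)$ non-empty. Your closing caveat about the informal semantics of $\updpriv{a}$ is fair, but under the reading the paper clearly intends (standard PAL restriction applied only to $a$'s submodel), your proof goes through.
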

\begin{proof}
    The statement follows from~\cref{Th:cons_Bel}.\ref{consist}.
\end{proof}
Note that because of Moorean sentences, there is generally no guarantee that $a$ believes in $\varphi$ after the public announcement.

\subsection{Simultaneous A~priori Belief Updates by Several Agents}
\label{subsec:example_Simultaneous}

So far we considered only scenarios where beliefs of exactly one agent become inconsistent, resulting in a~priori belief update for this agent. It is, of course, entirely possible that several agents become inconsistent simultaneously. Given that an a~priori belief update is performed by an agent privately and has no effect on the rest of the model, it is straightforward to apply several such updates in parallel.

\begin{figure}
\begin{subfigure}{.99\textwidth}
 \centering
  \resizebox{7.5cm}{!}{
    \begin{tikzpicture}
\node[nod, fill=gray, label=135:{\textbf{\Large{$W\upd\ASF$}}}] at (-6,0) [label=below:$\underline{A}$] (@) {};
\node (p) at (-8, 1.6 ) {\textbf{\Large{$\Bigl(\bigl(\calM\circledcirc_a\U\bigr) \upd \ASF ,\underline{A}\Bigr)$}}};
\node[nod, label=135:{\textbf{\large{$U^a\upd\ASF$}}}] at (0,0) [label=below:$A$] (A) {};
\node (110) [right=6cm of A] {\textbf{\Large{$W^{\neg a}\upd\ASF$}}};
\path[->, -{Stealth[scale width=1.5]}] (@)
edge node[below] {$a$} (A);
\path[->,-{Stealth[scale width=1.5]}] 
          (A) edge[loop above] node[above] {$a$} (A);

\begin{pgfonlayer}{background}
\filldraw [line width=15mm,line join=round,black!10]
      (A.north  -| A.east)  rectangle (A.south  -| A.west);
\filldraw [line width=10mm,line join=round,black!10]
      (110.north  -| 110.east)  rectangle (110.south  -| 110.west);
\filldraw [line width=15mm,line join=round,black!10]
      (@.north  -| @.east)  rectangle (@.south  -| @.west);
\end{pgfonlayer}
\end{tikzpicture}
}
    \caption{Pointed Kripke model representing the result of the public announcement of $\ASF$ on the a~priori updated model~\mbox{$(\calM\circledcirc_a\U,\underline{A})$} of \cref{MCEX:2}, as  described in \cref{ex:simul}.}
    \label{fig:EX1.4}
\end{subfigure}

\par\bigskip % force a bit of vertical whitespace

\begin{subfigure}{.99\textwidth}
 \resizebox{10.5cm}{!}{
\begin{tikzpicture}
\node[nod, fill=gray, label=135:{\textbf{\Large{$W\upd\ASF$}}}] at (-7,0) [label=below:$\underline{A}$] (@) {};

\node[nod] at (-2,-2) [label=220:$AB$, label=135:$\large{U^b}$] (B) {};
\node[nod] [right=2cm of B, label=below:$A$] (B1) {};

\node[nod] [below=3.5cm of B1, label=45:$AC$, label=180:$\large{U^c}$] (C) {};
\node[nod] [below=1.5cm of C, label=below:$A$] (C1) {};

\node[nod] at (6,-1) [label=below:$AB$] (110) {};
\node[nod] at (6,1) [label=above:$ABC$] (111) {};
\node[nod] at (9,1) [label=above:$AC$] (101) {};
\node[nod] at (8,3) [label=above:$BC$] (011) {};

\node[nod] at (6,-8) [label=below:$AB$] (110') {};
\node[nod] at (6,-6) [label=above:$ABC$] (111') {};
\node[nod] at (9,-6) [label=above:$AC$] (101') {};
\node[nod] at (8,-4) [label=above:$BC$] (011') {};

\node (n) [above=5cm of 111] {};
\node (w) [right=.01cm of n] {$W^{\neg a}\upd\ASF$};
\node (w1) [above=2cm of 111] {$W^{\neg b}$};
\node (w2) [above=2cm of 111'] {$W^{\neg c}$};

\node (p) [above=4cm of @] {\textbf{\Large{$\Bigl( \bigl((\calM\circledcirc_a\U ) \mid  \ASF \bigr) \circledcirc_{b,c} \U',\underline{A}\Bigr)$}}};

\node[nod] [left=5cm of w, label=below:$A$, label=135:$\large{U^a}\upd\ASF$] (A) {};

\path[<->, {Stealth[scale width=1.5]}-{Stealth[scale width=1.5]}]
(111)
edge node[above] {$b$} (101)
edge node[above] {$a$} (011)
edge node[right, pos=.4] {$c$} (110)
(111')
edge node[above] {$b$} (101')
edge node[above] {$a$} (011')
edge node[right, pos=.4] {$c$} (110')
(B)
edge node[below] {$b$} (B1)
(C)
edge node[right] {$c$} (C1)
;

\path[->, -{Stealth[scale width=1.5]}] 
(@)
edge[] node[above] {$a$} (A)
edge[] node[below] {$b$} (B)
edge[] node[above] {$c$} (C)
edge[] node[above] {$b$} (B1)
edge[] node[above] {$c$} (C1)
(B)
edge[] node[below] {$c$} (111)
edge[bend right=30] node[below] {$c$} (110)
(B1)
edge[] node[below] {$c$} (101)
(C)
edge[bend left=20] node[above] {$b$} (101')
(C)
edge[bend right=15] node[above] {$b$} (111')
(C1)
edge[] node[below] {$b$} (110')
;

\path[->,-{Stealth[scale width=1.5]}] 
          (011) edge[loop left] node[left] {$a,b,c$} (011)
          (111) edge[loop left] node[left] {$a,b,c$} (111)
          (110) edge[loop right] node[right] {$a,b,c$} (110)
          (101) edge[loop below] node[below] {$a,b,c$} (101)
          (011') edge[loop left] node[left] {$a,b,c$} (011')
          (111') edge[loop left] node[left] {$a,b,c$} (111')
          (110') edge[loop right] node[right] {$a,b,c$} (110')
          (101') edge[loop below] node[below] {$a,b,c$} (101')
          (A) edge[loop above] node[above] {$a$} (A)
          (B) edge[loop below] node[below] {$b$} (B)
          (C) edge[loop above] node[above] {$c$} (C)
          (B1) edge[loop right] node[right] {$b$} (B1)
          (C1) edge[loop left] node[left] {$c$} (C1)
          ;

\begin{pgfonlayer}{background}

\filldraw [line width=14mm,line join=round,black!10]
      (A.north  -| A.east)  rectangle (A.south  -| A.west);

\filldraw [line width=14mm,line join=round,black!10]
      (B.north  -| B1.east)  rectangle (B.south  -| B.west);

\filldraw [line width=14mm,line join=round,black!10]
      (C.north  -| C.east)  rectangle (C1.south  -| C.west);

\filldraw [line width=12mm,line join=round,black!10]
      (w.north  -| w.east)  rectangle (w.south  -| w.west);
      
\filldraw [line width=12mm,line join=round,black!10]
      (011.north  -| 101.east)  rectangle (110.south  -| 110.west);

\filldraw [line width=12mm,line join=round,black!10]
      (011'.north  -| 101'.east)  rectangle (110'.south  -| 110'.west);

\filldraw [line width=12mm,line join=round,black!10]
      (@.north  -| @.east)  rectangle (@.south  -| @.west);
\end{pgfonlayer}

\end{tikzpicture}
}
    \caption{Pointed Kripke model showing the intermediate state of \cref{ex:simul} after $b$ and $c$ independently performed  a~priori belief updates in parallel but have not yet (re)applied the public announcement of~$\ASF$.}
    \label{fig:EX1.5}
\end{subfigure}

\par\bigskip % force a bit of vertical whitespace

\begin{subfigure}{.99\textwidth}
    \resizebox{10cm}{!}{
    \begin{tikzpicture}
\node[nod, label=135:{\textbf{\large{$U^a\upd\ASF$}}}] at (5.5,0) [label=below:$A$] (A) {};
\node (110) [right=5cm of A] {\textbf{\Large{$W^{\neg a}\upd\ASF$}}};

\node[nod, fill=gray, label=135:{\textbf{\Large{$W\upd\ASF$}}}] at (-2,-3) [label=below:$\underline{A}$] (@') {};

\node at (-2.7,-1)  [] {\textbf{\large{$\Bigl(\left(\bigl((\calM\circledcirc_a\U ) \mid  \ASF \bigr) \circledcirc_{b,c} \U'\right)\updpriv{b,c}\ASF,\underline{A}\Bigr)$}}};

\node[nod, label=135:{\textbf{\large{$U^b\upd\ASF$}}}] at (5.5,-3) [label=below:$A$] (A') {};

\node (110') [right=5cm of A'] {\textbf{\Large{$W^{\neg b}\upd\ASF$}}};

\node[nod, label=180:{\textbf{\large{$U^c\upd\ASF$}}}] at (5.5,-6) [label=below:$A$] (A'') {};
\node (110'') [right=5cm of A''] {\textbf{\Large{$W^{\neg c}\upd\ASF$}}};

\path[->, -{Stealth[scale width=1.5]}] 
(@')
edge node[below] {$a$} (A) 
(A) edge[loop above] node[above] {$a$} (A)
(@')
edge node[below] {$b$} (A')
(A') edge[loop above] node[above] {$b$} (A')
(@')
edge node[below] {$c$} (A'')
(A'') edge[loop above] node[above] {$c$} (A'');

\begin{pgfonlayer}{background}
\filldraw [line width=15mm,line join=round,black!10]
      (A.north  -| A.east)  rectangle (A.south  -| A.west);
\filldraw [line width=10mm,line join=round,black!10]
      (110.north  -| 110.east)  rectangle (110.south  -| 110.west);

\filldraw [line width=15mm,line join=round,black!10]
      (A'.north  -| A'.east)  rectangle (A'.south  -| A'.west);
\filldraw [line width=10mm,line join=round,black!10]
      (110'.north  -| 110'.east)  rectangle (110'.south  -| 110'.west);
\filldraw [line width=15mm,line join=round,black!10]
      (@'.north  -| @'.east)  rectangle (@'.south  -| @'.west);

\filldraw [line width=15mm,line join=round,black!10]
      (A''.north  -| A''.east)  rectangle (A''.south  -| A''.west);
\filldraw [line width=10mm,line join=round,black!10]
      (110''.north  -| 110''.east)  rectangle (110''.south  -| 110''.west);
\end{pgfonlayer}
\end{tikzpicture}
}
    \caption{Pointed Kripke model illustrating the final stage of \cref{ex:simul} after a~priori belief updates of $b$ and $c$ and their application of the public announcement of $\ASF$ to their newly created parts of the model.}
    \label{fig:EX1.6}
\end{subfigure}
\caption{}
\end{figure}

\begin{example}[Simultaneous a~priori belief updates triggered by a public announcement]
\label{ex:simul}
Let us continue the scenario from \cref{MCEX:2} where we left off, i.e., after $a$ has restored her consistency, Father asked the children if they know whether they are muddy, and all three children stepped forward, which is equivalent to the public announcement of ``all step forward''
\begin{equation}
\label{eq:asf}
\ASF = (B_a m_a \lor B_a \neg m_a) \land (B_b m_b \lor B_b \neg m_b) \land (B_c m_c \lor B_c \neg m_c).
\end{equation}
The resulting model $\calM'\ce\bigl(\calM\circledcirc_a\U\bigr) \upd \ASF$ is shown in \cref{fig:EX1.4} and has only two worlds remaining from \cref{fig:EX1.3}, the actual world $\underline{A}$ and world $A$ from $U^a$. Thus, beliefs of~$b$~and~$c$ have become inconsistent (there~are no outgoing $b$- or $c$-arrows from the actual world $\underline{A}$) and $a$ knows that this is the case (there are no outgoing $b$- or $c$-arrows from $A$, the only world $a$ considers possible). At the same time, $a$ still maintains consistency of beliefs and thinks of herself as muddy:
\[
\calM' ,\underline{A} \models B_b \bot \land B_c \bot \land B_aB_b \bot \land B_a  B_c \bot\land \neg B_a \bot \land B_a m_a.
\]

At this point $b$ and $c$ independently perform each their own a~priori belief update triggered by the announcement of $\ASF$. Although $a$ can expect them to do so, one could argue that, due to the non-deterministic \emph{ad~hoc} nature of a~priori belief updates, $a$~does not think she can guess how $b$ and $c$ would choose to update their beliefs, nor is such a guess necessary for $a$ to be able to respond to Father. Accordingly, $a$ remains satisfied in her belief that she is muddy and that her companions are confused.

Suppose that each of~$b$~and~$c$ chooses to do the same thing that $a$ has done earlier, i.e., using $\calM^a$ from \cref{fig:EX1.1} (Middle) as both $\calM^b$ and $\calM^c$ and using $\calM^{\neg a}$  from \cref{fig:EX1.1} (Right) as both $\calM^{\neg b}$ and $\calM^{\neg c}$. Correspondence $\C$ is also the same for both~$b$~and~$c$. However, their local states $U^b = \{A, AB\}$ and $U^c = \{A, AC\}$ differ due to their differing points of view.

We present the state after $b$ and  $c$'s simultaneous independent a~priori update in \cref{fig:EX1.5}. They then each apply $\ASF$ to their private parts of the model respectively. No worlds in $W^{\neg b}$/$W^{\neg c}$ satisfy $\ASF$, so all are pruned. World~$AB$ of $U^b$ is rejected because there $c$ does not know its state, 
\[
\bigl((\calM\circledcirc_a\U ) \mid  \ASF \bigr) \circledcirc_{b,c} \U', AB \notmodels B_c m_c \lor B_c \neg m_c;
\]
similarly $AC$ of $U^c$ does not survive the announcement of $\ASF$ since $b$ there is not sure of its state. This leaves only worlds $A$ from $U^b$ and $U^c$ (and the whole $a$'s part of the model, which is not affected by a~priori thinking of $b$ and $c$), resulting in \cref{fig:EX1.6}.
Note that now each agent believes that the beliefs of other two agents are inconsistent while themselves correctly believing that $a$ is the only muddy child. In effect, all children solved the problem correctly but lost  trust in each other's reasoning: using 
\[
\calM''\ce\Bigl(\bigl((\calM\circledcirc_a\U ) \mid  \ASF \bigr) \circledcirc_{b,c} \U'\Bigr)\updpriv{b,c}\ASF
\]
for the model, $E\phi \ce B_a \phi \land B_b \phi \land B_c \phi$ for mutual knowledge, and $\hat{E}\phi  \ce \neg E \neg \phi$ for its dual,
\[
\calM'',\underline{A} \models E (m_a \land \neg m_b \land \neg m_c) \land \neg \hat{E} \bot \land B_a(B_b \bot \land B_c\bot) \land B_b(B_a \bot \land B_c \bot) \land B_c (B_a \bot \land B_b \bot).
\]

Note that update $\U'$ in~\cref{fig:EX1.5} represents two separate a~priori belief updates, one by~$b$ (rectangles $U^b$ and $W^{\neg b}$) and another by $c$ (rectangles $U^c$ and $W^{\neg c}$). Hence, here we take $\U'$ to be a partial function from the set $\agents$ of agents to a~priori belief update tuples, so that $\U'(b)$ is the a~priori update performed by $b$ while $\U'(c)$ is the one for $c$. This notation is similar to using $R$ for accessibility relations of all agents. The domain of $\U'$ is then listed in the subscript to $\circledcirc$ as in $ \bigl((\calM\circledcirc_a\U ) \mid  \ASF \bigr) \circledcirc_{b,c} \U'$. This does not cause any formal problems because,
 being private, individual a~priori belief updates  do not interfere with each other.

\end{example}

\subsection{A~priori Belief Updates Need Not Yield (Correct) Solutions}
\label{subsec:example_unsuccess}

In all examples considered so far, all agents performing the updates have succeeded in recovering a consistent epistemic state (though sometimes at the expense of expecting consistency from other agents). By no means do we claim that this is always the case. The \emph{ad~hoc} chosen a~priori belief update $\U$ need not lead to the resolution of the agent's conundrum. It is rather a matter of luck whether it would. For instance, consider the setup of  \cref{ex:consec2}. Previously, we had agent $b$ guess the exact model of the number line used by~$a$. But~$b$'s~guess can also be wrong. For instance, $b$ might think that $a$~mistakenly considers all integers rather than non-negative integers only, making $\calM^{\neg b}$ from \cref{fig:EX4.3} extend infinitely in both directions, retaining world~$(1,0)$, and not resulting in a consistent state for $b$ after  $B_a 2_b$ is taken into account. If the agent is persistent, it should be expected that such an unsuccessful update is rejected and a new attempt is made to restore consistency using a different a~priori belief update.

It is also possible that a wrong a~priori guess does  not result in an inconsistency but leads to wrong conclusions, undetected by any of the agents.
\begin{example}[Simultaneous a~priori belief updates creating consistent false beliefs]
\label{MCEX:4}
Consider a variant of \cref{MCEX:2} where all agents have the same initial explicit a~priori common belief $\APB_2$, but all are clean in actuality.
In this case, all agents detect inconsistency from the start, and each privately and independently performs an a~priori belief update. Suppose that, using the same reasoning as agent $a$ in \cref{MCEX:2} and agents $b$ and $c$ in \cref{ex:simul}, agents use a simultaneous update $\U''$ such that:  
\[
\U''_a \ce (\calM^a,\{A\}, \calM^{\neg a}, \C),
\qquad
\U''_b \ce (\calM^b,\{B\}, \calM^{\neg b}, \C),
\qquad
\U''_c \ce (\calM^c,\{C\}, \calM^{\neg c}, \C),
\]
where the only difference to the previous case is in their local states $V^a= \{A\}$, $V^b=\{B\}$, and $V^c=\{C\}$ that are chosen based on what they observe. The resulting model  is depicted in \cref{fig:EX3.1}. It is easy to see that:\looseness=-1
\[
\calM \circledcirc_{a,b,c} \U'',\underline{0} \models B_a (m_a \land B_b m_b \land B_c m_c)  \land B_b (m_b \land B_a m_a \land B_c m_c) \land B_c (m_c \land B_a m_a \land B_b m_b).
\]
\begin{figure}
\begin{subfigure}{.99\textwidth}
 \centering
  \resizebox{10cm}{!}{
\begin{tikzpicture}
\node[nod, fill=gray, label=135:{\textbf{\Large{$W$}}}] at (-7,0) [label=below:$\underline{0}$] (@) {};

\node[nod] at (0,-2) [label=below:$B$, label=135:$\large{V^b}$] (B) {};

\node[nod] [below=3.5cm of B, label=45:$C$, label=180:$\large{V^c}$] (C) {};

\node[nod] [above=7.5cm of B, label=below:$A$, label=135:$\large{V^a}$] (A) {}; 

\node[nod] at (6,5) [label=below:$AB$] (110'') {};
\node[nod] at (6,7) [label=above:$ABC$] (111'') {};
\node[nod] at (9,7) [label=above:$AC$] (101'') {};
\node[nod] at (8,9) [label=above:$BC$] (011'') {};

\node[nod] at (6,-1) [label=below:$AB$] (110) {};
\node[nod] at (6,1) [label=above:$ABC$] (111) {};
\node[nod] at (9,1) [label=above:$AC$] (101) {};
\node[nod] at (8,3) [label=above:$BC$] (011) {};

\node[nod] at (6,-8) [label=below:$AB$] (110') {};
\node[nod] at (6,-6) [label=above:$ABC$] (111') {};
\node[nod] at (9,-6) [label=above:$AC$] (101') {};
\node[nod] at (8,-4) [label=above:$BC$] (011') {};

\node (w) [above=2cm of 111''] {$W^{\neg a}$};
\node (w1) [above=2cm of 111] {$W^{\neg b}$};
\node (w2) [above=2cm of 111'] {$W^{\neg c}$};

\node (p) [above=4cm of @] {\textbf{\Large{$( \calM \circledcirc_{a,b,c} \U'',\underline{0})$}}};

\path[<->, {Stealth[scale width=1.5]}-{Stealth[scale width=1.5]}]
(111)
edge node[above] {$b$} (101)
edge node[above] {$a$} (011)
edge node[right, pos=.4] {$c$} (110)
(111')
edge node[above] {$b$} (101')
edge node[above] {$a$} (011')
edge node[right, pos=.4] {$c$} (110')
(111'')
edge node[above] {$b$} (101'')
edge node[above] {$a$} (011'')
edge node[right, pos=.4] {$c$} (110'')
;

\path[->, -{Stealth[scale width=1.5]}] 
(@)
edge[] node[above] {$a$} (A)
edge[] node[above] {$b$} (B)
edge[] node[above] {$c$} (C)
(A)
edge[] node[above] {$b$} (110'')
edge[] node[above] {$c$} (101'')
(B)
edge[bend left=15] node[above] {$c$} (011)
edge[] node[above] {$a$} (110)
(C)
edge[] node[above] {$b$} (011')
edge[bend right=15] node[above] {$a$} (101')
;

\path[->,-{Stealth[scale width=1.5]}] 
          (011) edge[loop left] node[left] {$a,b,c$} (011)
          (111) edge[loop left] node[left] {$a,b,c$} (111)
          (110) edge[loop right] node[right] {$a,b,c$} (110)
          (101) edge[loop below] node[below] {$a,b,c$} (101)
          (011') edge[loop left] node[left] {$a,b,c$} (011')
          (111') edge[loop left] node[left] {$a,b,c$} (111')
          (110') edge[loop right] node[right] {$a,b,c$} (110')
          (101') edge[loop below] node[below] {$a,b,c$} (101')
          (A) edge[loop above] node[above] {$a$} (A)
          (B) edge[loop above] node[above] {$b$} (B)
          (C) edge[loop above] node[above] {$c$} (C)
          (011'') edge[loop left] node[left] {$a,b,c$} (011'')
          (111'') edge[loop left] node[left] {$a,b,c$} (111'')
          (110'') edge[loop right] node[right] {$a,b,c$} (110'')
          (101'') edge[loop below] node[below] {$a,b,c$} (101'')
    
          ;

\begin{pgfonlayer}{background}

\filldraw [line width=14mm,line join=round,black!10]
      (A.north  -| A.east)  rectangle (A.south  -| A.west);

\filldraw [line width=14mm,line join=round,black!10]
      (B.north  -| B.east)  rectangle (B.south  -| B.west);

\filldraw [line width=14mm,line join=round,black!10]
      (C.north  -| C.east)  rectangle (C.south  -| C.west);

\filldraw [line width=12mm,line join=round,black!10]
      (011''.north  -| 101''.east)  rectangle (110''.south  -| 110''.west);
      
\filldraw [line width=12mm,line join=round,black!10]
      (011.north  -| 101.east)  rectangle (110.south  -| 110.west);

\filldraw [line width=12mm,line join=round,black!10]
      (011'.north  -| 101'.east)  rectangle (110'.south  -| 110'.west);

\filldraw [line width=12mm,line join=round,black!10]
      (@.north  -| @.east)  rectangle (@.south  -| @.west);
\end{pgfonlayer}

\end{tikzpicture}
}
    \caption{Pointed Kripke model representing the intermediate state of~\cref{MCEX:4} after all agents independently performed  a~priori belief updates in parallel but before any public announcements being made.}
    \label{fig:EX3.1}
\end{subfigure}

\par\bigskip % force a bit of vertical whitespace

\begin{subfigure}{.99\textwidth}
 \resizebox{10cm}{!}{
    \begin{tikzpicture}
\node[nod, label=135:{\textbf{\large{$V^a\upd\ASF$}}}] at (7,0) [label=below:$A$] (A) {};

\node (110) [right=6cm of A] {\textbf{\Large{$W^{\neg a}\upd\ASF$}}};

\node[nod, fill=gray, label=135:{\textbf{\Large{$W \upd \ASF$}}}] at (0,-3) [label=below:$\underline{0}$] (@') {};

\node at (-2.5,-1) [] {\textbf{\large{$(\calM \circledcirc_{a,b,c} \U'',\underline{0})\upd\ASF$}}};

\node[nod, label=135:{\textbf{\large{$V^b\upd\ASF$}}}] at (7,-3) [label=below:$B$] (A') {};

\node (110') [right=6cm of A'] {\textbf{\Large{$W^{\neg b}\upd\ASF$}}};

\node[nod, label=180:{\textbf{\large{$V^c\upd\ASF$}}}] at (7,-6) [label=below:$C$] (A'') {};
\node (110'') [right=6cm of A''] {\textbf{\Large{$W^{\neg c}\upd\ASF$}}};

\path[->, -{Stealth[scale width=1.5]}] 
(@')
edge node[below] {$a$} (A) 
(A) edge[loop above] node[above] {$a$} (A)
(@')
edge node[below] {$b$} (A')
(A') edge[loop above] node[above] {$b$} (A')
(@')
edge node[below] {$c$} (A'')
(A'') edge[loop above] node[above] {$c$} (A'');

\begin{pgfonlayer}{background}
\filldraw [line width=15mm,line join=round,black!10]
      (A.north  -| A.east)  rectangle (A.south  -| A.west);
\filldraw [line width=10mm,line join=round,black!10]
      (110.north  -| 110.east)  rectangle (110.south  -| 110.west);

\filldraw [line width=15mm,line join=round,black!10]
      (A'.north  -| A'.east)  rectangle (A'.south  -| A'.west);
\filldraw [line width=10mm,line join=round,black!10]
      (110'.north  -| 110'.east)  rectangle (110'.south  -| 110'.west);
\filldraw [line width=15mm,line join=round,black!10]
      (@'.north  -| @'.east)  rectangle (@'.south  -| @'.west);

\filldraw [line width=15mm,line join=round,black!10]
      (A''.north  -| A''.east)  rectangle (A''.south  -| A''.west);
\filldraw [line width=10mm,line join=round,black!10]
      (110''.north  -| 110''.east)  rectangle (110''.south  -| 110''.west);
\end{pgfonlayer}
\end{tikzpicture}
}
    \caption{Pointed Kripke model representing the final state of~\cref{MCEX:4} after all agents independently performed  a~priori belief updates in parallel and have applied the public announcement of~$\ASF$.}
    \label{fig:EX3.2}
\end{subfigure}

\caption{}
\end{figure}

In other words, after this a~priori belief update each child now erroneously believes that (a) it is the only muddy child and (b) other children ``erroneously'' believe to also be muddy.\footnote{The word \emph{erroneously} is here in air quotes because it has a flavor of a double Gettier example~\cite{Gettier63}. Firstly, $a$ believes that $b$~considers itself to be muddy, $b$ does in fact consider itself to be muddy, but not for the reason $a$ expects, as manifested by the difference between $b$'s part of the model representing $b$'s beliefs and $W^{\neg a}$ representing $a$'s rendition of $b$'s beliefs. Secondly, $a$ thinks that $b$ is wrong, i.e., that $b$ is clean, $a$ is in fact correct, but not for the reason $a$ expects, as manifested by the difference between the real world~$\underline{0}$ and world~$A$ of $V^a$.} Hence, all children step forward, a public update of $\ASF$ resulting in a model  shown in \cref{fig:EX3.2}.
This behavior conforms to everyone's expectations, so all children preserve consistency of beliefs. However, each child thinks that this behavior should have puzzled the other two, so that after stepping forward they think each other to have inconsistent beliefs:
\[
(\calM \circledcirc_{a,b,c} \U'')\upd \ASF,\underline{0} \models B_a ( B_b \bot \land B_c \bot)  \land B_b (B_a \bot \land B_c \bot) \land B_c (B_a \bot \land B_b \bot).
\]    
\end{example}

\subsection{Some Heuristics for A~priori Belief Updates}
\label{subsec:heuristics}
In developing a method for agents to ``think outside the box'' we did not want to put any boundaries or restrictions on the kinds of trial and backup models used in a~priori belief updates, did not want to ``box~them~in'' as it were. At the same time, there exist rather regular methods of generating a~priori belief updates, some of which we have already used. Let us outline some of these methods, which can easily be implemented in a form of an exhaustive trial-and-error search through finitely many possibilities.

\begin{paragraph}{Heuristics~1: Predefined master models}
    If an agent is aware of several incompatible alternative models of reality but is not sure which one of them better suits its observations and/or is not sure which points of view other agents entertain, then a~priori belief updates can be generated by assigning various combinations of these models to the agent itself and to other agents. This method was used in \cref{ex:consec2} for two different models of natural numbers.
\end{paragraph}

\begin{paragraph}{Heuristics~2: Varying explicit a~priori assumptions}
    Even when all agents agree on the same model, e.g., a master model representing the general rules of the puzzle, they may differ in additional a~priori assumptions each of them makes. Here the master model represents (common) implicit a~priori assumptions while additional assumptions are private and explicit in that they are represented by formulas, imposed on the master model. For instance, $\APB_2$ was such an explicit assumption in \cref{MCEX:2} imposed on the implicit assumptions modeled by $\calM_0$ from \cref{MCEX:1}. In many of our MCP-related examples, agent $a$  continues using \mbox{$\calM_0 \upd \APB_2$} as $\calM^{\neg a}$ to describe reasoning of others while herself switching to $\calM_0 \upd \APB_1$ as $\calM^a$ for her own reasoning, where $\APB_1 \ce m_a \lor m_b \lor m_c$ states that at least one child is muddy. If such an a~priori belief update fails to restore her consistency, she can then switch to the full $\calM_0$  or consider a more complex a~priori restriction.
\end{paragraph}

\begin{paragraph}{Heuristics~2a: Choosing explicit a~priori assumptions based on prior failures}
Using the master model, if the initial model or one of the previous unsuccessful attempts to reach a consistent state were based on some explicit a~priori beliefs $\APB$, then it is reasonable to attempt new explicit beliefs of the form $\APB' \land \neg \APB$ in constructing $\calM^a$ because the possibility of $\APB$ holding has already been rejected.
\end{paragraph}

\begin{paragraph}{Heuristics~3: Learning from inconsistencies arising from public announcements}
If an agent $a$ reaches an inconsistency as a result of the factual public announcement of $\varphi$, then $\varphi$ is a valuable information for guiding the a~priori reasoning process: in fact, to recover consistency, $\varphi$ should be true in at least one world of the equivalence class~$U^a$ of the trial model, so that when the public announcement is re-applied, the inconsistency won't arise again.
\end{paragraph}

\begin{paragraph}{Heuristics~4: Changing the underlying logic}
    It is also possible to relax the restrictions imposed on the trial and/or backup models. For instance, $a$ may try to explain the beliefs of $b$ and $c$ by the lack of positive and/or negative introspection on their part, which would necessitate the  relaxation of the requirements of transitivity and/or euclideanity on the backup model. 
\end{paragraph}

\section{Properties of A~priori Belief Updates}
\label{sec:propertiesAPBU}
This would be an appropriate place to list the axioms of a~priori belief updates, perhaps, akin to axioms of PAL for public announcements from~\cite{Plaza}. Unfortunately, this does not seem to be any easier to do than to provide a finite syntactic description of common epistemic puzzles. Even restricting our attention to finite models only, the question amounts to asking whether any finite combination of maximal consistent sets for the logic can be represented by a formula (or  finitely many formulas). The problem is that there are uncountably many maximal consistent sets (for an infinite set of  atomic propositions), hence, uncountably many finite combinations thereof, while only countably many formulas (cf.~\cite{Artemov22}). It seems difficult if not hopeless to try describing syntactically the result of an update  when no syntactic description of the update exists.\looseness=-1

Thus, in this section we try to ``model check'' rather than axiomatize the results of a~priori belief updates.

\begin{theorem}
\label{th:final_prop}
    Let\/ $\U = (\calM^a, U^a, \calM^{\neg a}, \C)$ be a single-agent a~priori belief update with trial model $\calM^a = \langle W^a, R^a, V^a\rangle$ and backup model $\calM^{\neg a} = \langle W^{\neg a}, R^{\neg a}, V^{\neg a}\rangle$ and let\/ $(\calM, v)$ be a pointed Kripke model  with $\calM = \langle W, R, V\rangle$  and $R_a(v) = \varnothing$. The following properties hold after agent~$a$'s~a~priori belief update:
    \begin{enumerate}
        \item For any formula $\psi$ that does not involve any modalities $B_b$ for agents~$b\ne a$, including for all purely propositional formulas, the following three statements are equivalent:
        \begin{enumerate}
        \item $\calM \circledcirc_a \U, v \models B_a \psi$;
        \item $\calM^a, u \models \psi$ for all $u \in U^a$;
        \item $\calM^a, u \models B_a \psi$ for at least one $u \in U^a$.
        \end{enumerate}
        In other words, agent~$a$'s factual beliefs and $a$'s beliefs about its own beliefs are fully determined by worlds from~$U^a$ of $\calM^a$.
        \item $\calM^{\neg a}, w \models \phi$ if{f} $\calM \circledcirc_a \U, w \models \phi$ for any  $w \in W^{\neg a}$ and any formula $\phi$.
        In other words, agent~$a$'s~higher-order beliefs about other agents' beliefs are fully determined by~$\calM^{\neg a}$.
    \end{enumerate}
\end{theorem}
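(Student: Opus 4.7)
The plan is to prove each part by structural induction on formulas, exploiting two closure properties inside $\calM \circledcirc_a \U$ that follow immediately from \cref{def:result:APBU}: (i)~from $v$ and from any $u\in U^a$, the only $R'_a$-successors are the worlds of $U^a$ itself; and (ii)~$W^{\neg a}$ is closed under $R'_c$-arrows for every agent $c$, with $R'_c$ restricted to $W^{\neg a}$ coinciding with $R^{\neg a}_c$.

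For part~1, I would first note that, in addition to~(i), $R^a_a(u) = U^a$ already holds inside $\calM^a$ because $U^a$ is an $a$-cluster of a partial equivalence relation. Then, by induction on formulas $\psi$ built from atoms, Booleans, $B_a$, and public announcements (no $B_b$ for $b\ne a$), I would establish the central claim
\[
(\star)\qquad \calM \circledcirc_a \U, u \models \psi \quad\Longleftrightarrow\quad \calM^a, u \models \psi \qquad \text{for every } u \in U^a.
\]
Atoms use $V'(p) \cap U^a = V^a(p)\cap U^a$; Booleans are immediate; for $B_a\chi$, both sides quantify over the same successor set $U^a$, so the induction hypothesis applies; for $[\phi_1]\phi_2$ the standard DEL mutual recursion is used, with the inductive hypothesis on $\phi_1$ ensuring that the restricted $a$-clusters coincide as sets. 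From $(\star)$ the equivalence (a)$\Leftrightarrow$(b) is immediate since $R'_a(v)=U^a$; for (b)$\Leftrightarrow$(c), observe that $\calM^a,u\models B_a\psi$ unfolds to ``$\calM^a,w\models\psi$ for all $w\in U^a$'', which is independent of the witness $u$, so ``for some $u\in U^a$'' collapses to ``for all $u\in U^a$''.

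For part~2, property~(ii) makes $W^{\neg a}$ a \emph{generated submodel} of $\calM\circledcirc_a\U$ isomorphic to $\calM^{\neg a}$: inspecting each summand in the definition of $R'_a$ and $R'_b$ for $b\ne a$, only $R^{\neg a}_a$ and $R^{\neg a}_b$ can originate in $W^{\neg a}$, and $V'(p)\cap W^{\neg a} = V^{\neg a}(p)$. A routine induction on $\phi$ then yields the equivalence: the atomic and Boolean cases are direct, and for $B_c\phi$ the generated-submodel property matches the two successor sets and the induction hypothesis applies. The public-announcement case $[\phi_1]\phi_2$ is the delicate one: by IH on $\phi_1$, the set of $W^{\neg a}$-worlds surviving the announcement in $\calM\circledcirc_a\U$ equals the corresponding set in $\calM^{\neg a}$; since no arrow enters $W^{\neg a}$ from outside, the restricted relations after the announcement still form a generated submodel isomorphic to $\calM^{\neg a}\upd\phi_1$, and IH on $\phi_2$ finishes the case.

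The main obstacle I anticipate in both parts is the public-announcement case, because one must interleave the induction on the outermost connective with the semantic model update (the usual DEL mutual recursion) and verify that the relevant closure property (of $U^a$ under $a$-arrows in part~1, of $W^{\neg a}$ under all arrows in part~2) is preserved by the restriction to $\phi_1$-worlds. Once this closure-preservation step is in hand, all remaining cases are bookkeeping.
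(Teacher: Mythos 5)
Your proposal is correct and follows essentially the same route as the paper: the paper's proof simply observes that the identity on $U^a$ is an $a$-bisimulation between $\calM^a$ and $\calM\circledcirc_a\U$ (for part~1) and that the identity on $W^{\neg a}$ is a full bisimulation (for part~2), which is exactly the content of your closure properties (i) and (ii) plus the valuation checks. Your explicit induction, including the careful treatment of the public-announcement case, merely unfolds the bisimulation-invariance argument that the paper leaves implicit.
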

\begin{proof}
    \begin{enumerate}
        \item The first statement easily follows from the fact that the identity relation on~$U^a$ is an $a$-bi\-simul\-at\-ion between $\calM^a$ and $\calM \circledcirc_a \U$ (where $a$-bisimulation means that forth and back conditions are restricted to $R_a$ transitions).
        \item This follows from the fact that the identity relation on $W^{\neg a}$ is a full bisimulation between Kripke models~$\calM^{\neg a}$ and~$\calM \circledcirc_a \U$.\qedhere
    \end{enumerate}
\end{proof}

\begin{remark}
    For an a~priori update $\U=(\calM^a,U^a,\calM^{\neg a}, \C)$, one
    might think that global properties of models~$\calM^a$~and~$\calM^{\neg a}$ would be transferred to $a$'s part of the model after the a~priori update~$\U$, so that after the update $a$ believes all global properties used in the update construction. This does hold for propositional validities, as follows from  the preceding theorem.  In fact, truth of $\psi$ in $U^a$ of $\calM^{a}$ is already sufficient to ensure $a$'s post-update belief $B_a \psi$ if $\psi$ is purely propositional. However, this  transfer of global properties fails for epistemic formulas involving other agents. Consider, for instance, the a~priori update $\U$ from~\cref{fig:EX1.1} for \cref{MCEX:2}. For formula $\ASF$ from~\eqref{eq:asf}, it is clear that $\calM^a\models \neg \ASF$ and $\calM^{\neg a} \models \neg \ASF$. After all, we know that, in the standard MCP, children do not step forward all at once unless all are muddy. However, in model $\calM \circledcirc_a \U$ from \cref{fig:EX1.3}, we have $\calM \circledcirc_a \U, A \models \ASF$, resulting in $\calM \circledcirc_a \U, \underline{A} \models B_a\ASF$. Thus, global assumptions about other agents' beliefs need not survive in the face of the mismatch between two different points of view: the one $a$~reserves for itself vs.~the one $a$~assigns to others. This negative result also extends to implicit assumptions such as factivity of beliefs since in the same example both $\calM^a$~and~$\calM^{\neg a}$~represented agents with factive beliefs, but the model after the a~priori update lacks reflexivity for $b$ and $c$ predictably resulting in their beliefs not being factive.\looseness=-1
\end{remark}

 \begin{theorem}
 \label{theo:no_info}
 When agent~$a$'s a~priori update\/ $\U$ triggered by a public statement $\phi$ of agent~$b$ results in agent~$a$ believing that agent~$b$ has inconsistent beliefs, (re)applying $b$'s public statement after the a~priori update does not affect $a$'s~factual beliefs, as well as $a$'s beliefs about beliefs of $a$ and $b$. In other words, if $\calM\upd B_b \phi, v \models B_a \bot$ and\/ $(\calM\upd B_b \phi) \circledcirc_a \U, v \models B_a B_b \bot$, then for any formula $\psi$ that does not involve any modalities $B_c$ for agents~$c \notin\{a,b\}$,  
 \[
 (\calM\upd B_b \phi) \circledcirc_a \U, v \models B_a \psi 
 \qquad \Longleftrightarrow\qquad 
 \left((\calM\upd B_b \phi) \circledcirc_a \U\right) \updpriv{a} B_b \phi, v \models B_a \psi.
\]
 \end{theorem}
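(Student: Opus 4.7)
The plan is to exploit the fact that the inconsistency hypothesis $B_a B_b \bot$ does two things at once: it makes the re-announcement of $B_b \phi$ vacuously true throughout $U^a$, so that this cluster survives $\updpriv{a}$ intact, and it collapses every $B_b$-subformula to a vacuous truth inside $U^a$. Together these imply that the private announcement cannot alter any $\{B_a, B_b\}$-formula evaluated from $v$.

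Concretely, write $\calM' \ce (\calM \upd B_b \phi) \circledcirc_a \U$ and $\calM'' \ce \calM' \updpriv{a} B_b \phi$. First I would compute, directly from~\cref{def:result:APBU}, that $R'_a(v) = U^a$ (since $R_a(v) = \varnothing$ by assumption and the update contributes $\{v\} \times U^a$) and that $R'_a(u) = U^a$ for every $u \in U^a$. Unfolding $\calM', v \models B_a B_b \bot$ then yields $R'_b(u) = \varnothing$ for every $u \in U^a$. In particular each $u \in U^a$ satisfies $B_b \phi$ vacuously in $\calM'$, so the private update $\updpriv{a} B_b \phi$, which only filters the $a$-part $U^a \cup W^{\neg a}$, preserves $U^a$ in its entirety; it also leaves $v$ untouched, so $R''_a(v) = U^a$ as well.

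The core of the argument is then an induction on $\psi$, restricted to formulas that use no $B_c$ with $c \notin \{a,b\}$, showing
\[
\calM', u \models \psi \quad\Longleftrightarrow\quad \calM'', u \models \psi \qquad \text{for every } u \in U^a.
\]
Atoms and Boolean connectives are immediate because the valuation on $U^a$ does not change. For $\psi = B_a \theta$, both $R'_a(u)$ and $R''_a(u)$ equal $U^a$, so the equivalence follows from the IH applied pointwise in $U^a$. For $\psi = B_b \theta$, both sides are vacuously true because $R''_b(u) \subseteq R'_b(u) = \varnothing$. Combining this claim with $R'_a(v) = U^a = R''_a(v)$ yields the theorem.

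I do not expect a serious obstacle: the whole argument rests on the single observation that, under the hypothesis, $U^a$ is a closed $R_a$-cluster in $\calM'$ that sees nothing via $R_b$, so the identity on $U^a$ is an $\{a,b\}$-bisimulation between $\calM'$ and $\calM''$. The only point to handle carefully is to confirm, case by case from~\cref{def:result:APBU}, that no stray $R_a$-arrows leave $U^a$ toward $W^{\neg a}$ or $W$, which could otherwise have been affected by the private update.
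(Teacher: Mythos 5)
Your proposal is correct and follows essentially the same route as the paper's proof: the paper argues that $B_aB_b\bot$ at $v$ forces $R_b(u)=\varnothing$ for all $u\in U^a$, so every world of $U^a$ vacuously satisfies $B_b\phi$ and survives the private re-announcement, and the $U^a$ part is $ab$-bisimilar to itself afterwards. Your explicit induction on $\psi$ and the check that no $a$- or $b$-arrows leave $U^a$ just unpack that bisimilarity claim in detail.
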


 \begin{proof}
 Indeed, $(\calM \upd B_b \phi)\circledcirc_a \U, v \models B_a B_b \bot$ means that there are no $b$-outgoing arrows from any world of $U^a$ in $(\calM \upd B_b \phi)\circledcirc_a \U$. While the public update may affect worlds in $W^{\neg a}$ and, thereby, what $a$ thinks of beliefs of other agents (not $a$ or $b$), the $U^a$ part of the model before the public announcement update is $ab$-bisimilar to itself after the public announcement.
 \end{proof}

\section{Conclusions}
\label{sec:concl}

To the best of our knowledge, this paper provides the first epistemic formalization of 
a~priori belief updates, which are crucial for  modeling of self-adaptive systems, where agents should possess self-correcting abilities. 

We provide multiple examples demonstrating how such self correction can be achieved in various versions of standard epistemic puzzles, e.g.,~consecutive numbers and muddy children, in cases where agents find themselves in an inconsistent epistemic state. We also provide examples of self correction not resolving the inconsistency or resolving it in a way that results in several or even all agents arriving at false conclusions. In one case, the false conclusions include the inconsistency of other agents' beliefs, which prevents any future corrections no matter which communications will occur.  We prove properties of a~priori belief updates semantically, explain the difficulties of developing a logic of a~priori updates,  and provide counterexamples to several preservation properties, underscoring the arbitrary nature of a~priori updates.

\paragraph{Related work.} We believe that some aspects of the presented a~priori belief updates could be implemented in plausibility models via plausibility change~\cite{vDitmarsch05,BaltagS08,vBenthemS15}. However, this would require to embed all a~priori updates an agent might ever consider into the initial model. Not only would this lead to a significant increase of the initial model size, but it would also mean that all possible self-correcting subroutines, as well as their order of application, are pre-programmed, turning them from autonomous a priori actions into deterministic a posteriori private updates. This approach may have merits for traditional distributed systems but is less suitable for SASO systems.

\paragraph{Future work.} We aim at extending the update mechanism to more general action models~\cite{ditmarsch2007dynamic} and developing  logics for restricted types of a~priori belief updates, which may find applications to belief update synthesis problem.
We also plan to address the question of communicating new a~priori beliefs (i.e.,~modeling a~posteriori updates about a~priori beliefs), which would enable other agents to guide the recovery (or dually, state corruption) of the agent in question.

It would also be interesting to explore how  these methods would look in the formalism of simplicial complexes~\cite{Goubault_2018,vDitmarschGLLR21}, which are dual to Kripke models. For instance, the operation of choosing a new local state, which amounts to identifying a suitable equivalence class in a Kripke model, would, in a simplicial model, correspond to choosing a new singleton vertex. This appears to be a more basic operation suggesting that the process of a~priori belief updates may look more natural when presented  simplicially.

\paragraph{Acknowledgments.}  We are grateful to Hans van Ditmarsch for his discerning comments on an earlier version of this paper, as well as to Stephan Felber, Kristina Fruzsa, Rojo Randrianomentsoa, Hugo Rinc\'on Galeana, Ulrich Schmid, and Thomas Schl\"ogl %, and Tuomas Tahko 
for multiple illuminating and inspiring discussions.

\bibliographystyle{alphaurl}
\bibliography{references1}

\newcommand{\etalchar}[1]{$^{#1}$}
\begin{thebibliography}{vDvdHK07}

\bibitem[Art20]{artemov}
Sergei Artemov.
\newblock Observable models.
\newblock In Sergei Artemov and Anil Nerode, editors, {\em Logical Foundations
  of Computer Science: International Symposium, LFCS~2020, Deerfield Beach,~FL,
  USA, January~4--7, 2020, Proceedings}, volume 11972 of {\em Lecture Notes in
  Computer Science}, pages 12--26. Springer, 2020.
\newblock \href {http://dx.doi.org/10.1007/978-3-030-36755-8_2}
  {\path{doi:10.1007/978-3-030-36755-8_2}}.

\bibitem[Art22]{Artemov22}
Sergei Artemov.
\newblock Towards syntactic epistemic logic.
\newblock {\em Fundameta Informaticae}, 186(1--4):45--62, 2022.
\newblock Available from: \url{https://fi.episciences.org/10792/}.

\bibitem[BG09]{ft_self-*}
Andrew Berns and Sukumar Ghosh.
\newblock Dissecting self-* properties.
\newblock In {\em SASO~2009: Third IEEE International Conference on
  Self-Adaptive and Self-Organizing Systems}, pages 10--19. IEEE, 2009.
\newblock \href {http://dx.doi.org/10.1109/SASO.2009.25}
  {\path{doi:10.1109/SASO.2009.25}}.

\bibitem[BM14]{ben2014beyond}
Ido {Ben-Zvi} and Yoram Moses.
\newblock Beyond {L}amport's \emph{Happened-before}: On time bounds and the
  ordering of events in distributed systems.
\newblock {\em Journal of the ACM}, 61(2):13:1--13:26, April 2014.
\newblock \href {http://dx.doi.org/10.1145/2542181}
  {\path{doi:10.1145/2542181}}.

\bibitem[BMS98]{Baltag2016}
Alexandru Baltag, Lawrence~S. Moss, and S{\l}awomir Solecki.
\newblock The logic of public announcements, common knowledge, and private
  suspicions.
\newblock In Itzhak Gilboa, editor, {\em Theoretical Aspects of Rationality and
  Knowledge: Proceedings of the Seventh Conference (TARK~1998)}, pages 43--56.
  Morgan Kaufmann, 1998.
\newblock Available from:
  \url{http://tark.org/proceedings/tark_jul22_98/p43-baltag.pdf}.

\bibitem[BS08]{BaltagS08}
Alexandru Baltag and Sonja Smets.
\newblock A qualitative theory of dynamic interactive belief revision.
\newblock In Giacomo Bonanno, Wiebe van~der Hoek, and Michael Wooldridge,
  editors, {\em Logic and the Foundations of Game and Decision Theory
  (LOFT~7)}, volume~3 of {\em Texts in Logic and Games}, pages 11--58.
  Amsterdam University Press, 2008.
\newblock Available from: \url{https://www.jstor.org/stable/j.ctt46mz4h.4}.

\bibitem[CDKB11]{DS:concept_design}
George Coulouris, Jean Dollimore, Tim Kindberg, and Gordon Blair.
\newblock {\em Distributed Systems: Concepts and Design}.
\newblock Addison--Wesley, 5th edition, 2011.

\bibitem[CGM22]{CGM14}
Armando Casta{\~n}eda, Yannai~A. Gonczarowski, and Yoram Moses.
\newblock Unbeatable consensus.
\newblock {\em Distributed Computing}, 35(2):123--143, April 2022.
\newblock \href {http://dx.doi.org/10.1007/s00446-021-00417-3}
  {\path{doi:10.1007/s00446-021-00417-3}}.

\bibitem[CSTK23]{APK}
Giorgio Cignarale, Ulrich Schmid, Tuomas Tahko, and Roman Kuznets.
\newblock The role of a priori belief in the design and analysis of
  fault-tolerant distributed systems.
\newblock {\em Minds and Machines}, 33(2):293--319, June 2023.
\newblock \href {http://dx.doi.org/10.1007/s11023-023-09631-3}
  {\path{doi:10.1007/s11023-023-09631-3}}.

\bibitem[FHMV95]{bookof4}
Ronald Fagin, Joseph~Y. Halpern, Yoram Moses, and Moshe~Y. Vardi.
\newblock {\em Reasoning About Knowledge}.
\newblock MIT Press, 1995.
\newblock \href {http://dx.doi.org/10.7551/mitpress/5803.001.0001}
  {\path{doi:10.7551/mitpress/5803.001.0001}}.

\bibitem[FKS21]{fire}
Krisztina Fruzsa, Roman Kuznets, and Ulrich Schmid.
\newblock Fire!
\newblock In Joseph Halpern and Andr{\'e}s Perea, editors, {\em Proceedings of
  the Eighteenth Conference on Theoretical Aspects of Rationality and
  Knowledge, Beijing, China, June 25--27, 2021}, volume 335 of {\em Electronic
  Proceedings in Theoretical Computer Science}, pages 139--153. Open Publishing
  Association, 2021.
\newblock \href {http://dx.doi.org/10.4204/EPTCS.335.13}
  {\path{doi:10.4204/EPTCS.335.13}}.

\bibitem[Get63]{Gettier63}
Edmund~L. Gettier.
\newblock Is justified true belief knowledge?
\newblock {\em Analysis}, 23(6):121--123, June 1963.
\newblock \href {http://dx.doi.org/10.1093/analys/23.6.121}
  {\path{doi:10.1093/analys/23.6.121}}.

\bibitem[GG97]{GerbrandyG97}
Jelle Gerbrandy and Willem Groeneveld.
\newblock Reasoning about information change.
\newblock {\em Journal of Logic, Language, and Information}, 6(2):147--169,
  April 1997.
\newblock \href {http://dx.doi.org/10.1023/A:1008222603071}
  {\path{doi:10.1023/A:1008222603071}}.

\bibitem[GLR21]{Goubault_2018}
{\'E}ric Goubault, J{\'e}r{\'e}my Ledent, and Sergio Rajsbaum.
\newblock A simplicial complex model for dynamic epistemic logic to study
  distributed task computability.
\newblock {\em Information and Computation}, 278:104597, June 2021.
\newblock \href {http://dx.doi.org/10.1016/j.ic.2020.104597}
  {\path{doi:10.1016/j.ic.2020.104597}}.

\bibitem[GM20]{GM18:PODC}
Guy Goren and Yoram Moses.
\newblock Silence.
\newblock {\em Journal of the ACM}, 67(1):3:1--3:26, February 2020.
\newblock \href {http://dx.doi.org/10.1145/3377883}
  {\path{doi:10.1145/3377883}}.

\bibitem[Hin62]{Hintikka1962-HINKAB}
Jaakko Hintikka.
\newblock {\em Knowledge and Belief: An Introduction to the Logic of the Two
  Notions}.
\newblock Cornell University Press, 1962.

\bibitem[HM90]{HM90}
Joseph~Y. Halpern and Yoram Moses.
\newblock Knowledge and common knowledge in a distributed environment.
\newblock {\em Journal of the ACM}, 37(3):549--587, July 1990.
\newblock \href {http://dx.doi.org/10.1145/79147.79161}
  {\path{doi:10.1145/79147.79161}}.

\bibitem[KPSF19a]{KPSF19:TARK}
Roman Kuznets, Laurent Prosperi, Ulrich Schmid, and Krisztina Fruzsa.
\newblock Causality and epistemic reasoning in byzantine multi-agent systems.
\newblock In Lawrence~S. Moss, editor, {\em Proceedings of the Seventeenth
  Conference on Theoretical Aspects of Rationality and Knowledge, Toulouse,
  France, 17--19~July 2019}, volume 297 of {\em Electronic Proceedings in
  Theoretical Computer Science}, pages 293--312. Open Publishing Association,
  2019.
\newblock \href {http://dx.doi.org/10.4204/EPTCS.297.19}
  {\path{doi:10.4204/EPTCS.297.19}}.

\bibitem[KPSF19b]{KuznetsPSF19FroCoS}
Roman Kuznets, Laurent Prosperi, Ulrich Schmid, and Krisztina Fruzsa.
\newblock Epistemic reasoning with byzantine-faulty agents.
\newblock In Andreas Herzig and Andrei Popescu, editors, {\em Frontiers of
  Combining Systems: 12th~International Symposium, FroCoS~2019, London, UK,
  September~4--6, 2019, Proceedings}, volume 11715 of {\em Lecture Notes in
  Artificial Intelligence}, pages 259--276. Springer, 2019.
\newblock \href {http://dx.doi.org/10.1007/978-3-030-29007-8_15}
  {\path{doi:10.1007/978-3-030-29007-8_15}}.

\bibitem[Lew69]{Lewis69}
David~K. Lewis.
\newblock {\em Convention: A Philosophical Study}.
\newblock Harvard University Press, 1969.

\bibitem[Lyn96]{Lynch_DA}
Nancy~A. Lynch.
\newblock {\em Distributed Algorithms}.
\newblock Morgan Kaufmann, 1996.

\bibitem[Mos16]{Mos15TARK}
Yoram Moses.
\newblock Relating knowledge and coordinated action: The knowledge of
  preconditions principle.
\newblock In R.~Ramanujam, editor, {\em Proceedings Fifteenth Conference on
  Theoretical Aspects of Rationality and Knowledge, Carnegie Mellon University,
  Pittsburgh, USA, June 4--6, 2015}, volume 215 of {\em Electronic Proceedings
  in Theoretical Computer Science}, pages 231--245. Open Publishing
  Association, 2016.
\newblock \href {http://dx.doi.org/10.4204/EPTCS.215.17}
  {\path{doi:10.4204/EPTCS.215.17}}.

\bibitem[MT88]{MT88}
Yoram Moses and Mark~R. Tuttle.
\newblock Programming simultaneous actions using common knowledge.
\newblock {\em Algorithmica}, 3(1--4):121--169, November 1988.
\newblock \href {http://dx.doi.org/10.1007/BF01762112}
  {\path{doi:10.1007/BF01762112}}.

\bibitem[Pla89]{Plaza}
J.~A. Plaza.
\newblock Logics of public communications.
\newblock In M.~L. Emrich, M.~S. Pfeifer, M.~Hadzikadic, and Z.W. Ras, editors,
  {\em Proceedings of the Fourth International Symposium on Methodologies for
  Intelligent Systems: Poster Session Program}, pages 201--216. Oak Ridge
  National Laboratory, 1989.

\bibitem[Qui51]{quine}
W.~V. Quine.
\newblock Two dogmas of empiricism.
\newblock {\em Philosophical Review}, 60(1):20--43, January 1951.
\newblock \href {http://dx.doi.org/10.2307/2181906}
  {\path{doi:10.2307/2181906}}.

\bibitem[SS23]{SchloeglS23}
Thomas Schl{\"o}gl and Ulrich Schmid.
\newblock A sufficient condition for gaining belief in byzantine fault-tolerant
  distributed systems.
\newblock In Rineke Verbrugge, editor, {\em Proceedings of the Nineteenth
  conference on Theoretical Aspects of Rationality and Knowledge, Oxford,
  United Kingdom, 28--30th~June 2023}, volume 379 of {\em Electronic
  Proceedings in Theoretical Computer Science}, pages 487--497. Open Publishing
  Association, 2023.
\newblock \href {http://dx.doi.org/10.4204/EPTCS.379.37}
  {\path{doi:10.4204/EPTCS.379.37}}.

\bibitem[Tah08]{Tahko2008-TAHAND}
Tuomas~E. Tahko.
\newblock A new definition of a priori knowledge: In search of a modal basis.
\newblock {\em Metaphysica}, 9(1):57--68, April 2008.
\newblock \href {http://dx.doi.org/10.1007/s12133-007-0022-7}
  {\path{doi:10.1007/s12133-007-0022-7}}.

\bibitem[Tah11]{Tahko2011-TAHAPA}
Tuomas~E. Tahko.
\newblock A priori and a posteriori: A bootstrapping relationship.
\newblock {\em Metaphysica}, 12(2):151--164, October 2011.
\newblock \href {http://dx.doi.org/10.1007/s12133-011-0083-5}
  {\path{doi:10.1007/s12133-011-0083-5}}.

\bibitem[THvM{\etalchar{+}}14]{Knowthyself}
Sven Tomforde, J{\"o}rg H{\"a}hner, Sebastian von Mammen, Christian Gruhl,
  Bernhard Sick, and Kurt Geihs.
\newblock ``{K}now thyself''~--- {C}omputational self-reflection in intelligent
  technical systems.
\newblock In {\em SaSow~2014: 2014~IEEE~Eighth International Conference on
  Self-Adaptive and Self-Organizing Systems Workshops}, pages 150--159. IEEE,
  2014.
\newblock \href {http://dx.doi.org/10.1109/SASOW.2014.25}
  {\path{doi:10.1109/SASOW.2014.25}}.

\bibitem[vBS15]{vBenthemS15}
Johan van Benthem and Sonja Smets.
\newblock Dynamic logics of belief change.
\newblock In Hans van Ditmarsch, Joseph~Y. Halpern, Wiebe van~der Hoek, and
  Barteld Kooi, editors, {\em Handbook of Epistemic Logic}, pages 313--393.
  College Publications, 2015.

\bibitem[vD05]{vDitmarsch05}
Hans~P. van Ditmarsch.
\newblock Prolegomena to dynamic logic for belief revision.
\newblock {\em Synthese}, 147(2):229--275, November 2005.
\newblock \href {http://dx.doi.org/10.1007/s11229-005-1349-7}
  {\path{doi:10.1007/s11229-005-1349-7}}.

\bibitem[vDGL{\etalchar{+}}21]{vDitmarschGLLR21}
Hans van Ditmarsch, {\'E}ric Goubault, Marijana Lazi{\'c}, J{\'e}r{\'e}my
  Ledent, and Sergio Rajsbaum.
\newblock A dynamic epistemic logic analysis of equality negation and other
  epistemic covering tasks.
\newblock {\em Journal of Logical and Algebraic Methods in Programming},
  121:100662, June 2021.
\newblock \href {http://dx.doi.org/10.1016/j.jlamp.2021.100662}
  {\path{doi:10.1016/j.jlamp.2021.100662}}.

\bibitem[vDK15]{vDitmarschK15}
Hans van Ditmarsch and Barteld Kooi.
\newblock {\em One Hundred Prisoners and a Light Bulb}.
\newblock Springer, 2015.
\newblock \href {http://dx.doi.org/10.1007/978-3-319-16694-0}
  {\path{doi:10.1007/978-3-319-16694-0}}.

\bibitem[vDvdHK07]{ditmarsch2007dynamic}
Hans van Ditmarsch, Wiebe van~der Hoek, and Barteld Kooi.
\newblock {\em Dynamic Epistemic Logic}, volume 337 of {\em Synthese Library}.
\newblock Springer, 2007.
\newblock \href {http://dx.doi.org/10.1007/978-1-4020-5839-4}
  {\path{doi:10.1007/978-1-4020-5839-4}}.

\end{thebibliography}

\end{document}